\newcommand{\mc}{\mathcal}
\newcommand{\cp}{\times}
\newcommand{\di}{\nabla\cdot}
\newcommand{\cu}{\nabla\times} 
\newcommand{\JI}[1]{\bol{#1}\cdot\cu{\bol{#1}}}
\newcommand{\BN}[1]{\bol{#1}\cp\left(\cu{\bol{#1}}\right)}
\newcommand{\bol}{\boldsymbol}
\newcommand{\abs}[1]{\left\lvert{#1}\right\rvert}
\newcommand{\w}{\wedge}
\newcommand{\lr}[1]{\left({#1}\right)}
\newcommand{\mf}{\mathfrak}
\newcommand{\p}{\partial}
\newcommand{\ti}[1]{\textit{#1}}
\newcommand{\tb}[1]{\textbf{#1}}
\newcommand{\ov}[1]{\mkern 1.5mu\overline{\mkern-1.5mu#1\mkern-1.5mu}\mkern 1.5mu}
\newcommand{\B}[3]{\cos{#3}\,\nabla{#2}+\sin{#3}\, \nabla{#1}}
\newtheorem{mydef}{\textit{Def}}%[section]
\newtheorem{remark}{\textit{Remark}}%[section]
\newtheorem{theorem}{\textit{Theorem}}%[section]
\newtheorem{proposition}{\textit{Proposition}}%[section]
\newtheorem{example}{\textit{Example}}%[section]
\begin{document}
\title{Existence of Ideal Magnetofluid Equilibria\\ Without Continuous Euclidean Symmetries}
%Magnetofluidostatic Fields\\ Without Euclidean Symmetries}
\author[1]{N. Sato} %\author[1]{Z. Yoshida}
\affil[1]{Research Institute for Mathematical Sciences, \protect\\ Kyoto University, Kyoto 606-8502, Japan \protect\\ Email: sato@kurims.kyoto-u.ac.jp}
\date{\today}
\setcounter{Maxaffil}{0}
\renewcommand\Affilfont{\itshape\small}

\maketitle

\begin{abstract}
We study the existence of steady solutions 
of ideal magnetofluid systems (ideal MHD and ideal Euler equations) 
without continuous Euclidean symmetries. 
It is shown that all nontrivial magnetofluidostatic solutions 
are locally symmetric, 
although the symmetry is not necessarily an Euclidean isometry.
Furthermore, magnetofluidostatic equations admit
both force-free (Beltrami type) and non-force-free (with finite pressure gradients) solutions that do not exhibit invariance under translations, rotations, or their combination.
Examples of smooth solutions without continuous Euclidean symmetries in bounded domains are given.
Finally, the existence of square integrable solutions of the tangential 
boundary value problem without continuous Euclidean symmetries is proved. 
\end{abstract}

\section{Introduction}
Ideal magnetofluid equilibria are described by magnetofluidostatic fields. 
A Magnetofluidostatic field is a solution $\bol{w}\in C^1\lr{\Omega}$ of the system of first order partial differential equations
\begin{equation}
\BN{w}=\nabla\chi,~~~~\di\bol{w}=0~~~~{\rm in}~~\Omega.\label{MFF}
\end{equation}
Here, $\Omega\subset\mathbb{R}^3$ denotes a bounded domain, and $\chi\in C^1\lr{\Omega}$ a function. For the purpose of the present paper, 
the function $\chi$ is not given, and we shall say that $\bol{w}$ 
solves \eqref{MFF} provided that the system is satisfied for some
appropriate choice of $\chi$. 
In the second part of the paper, we will consider system \eqref{MFF} with 
tangential boundary conditions
\begin{equation}
\bol{w}\cdot\bol{n}=0~~~~{\rm on}~~\p\Omega.\label{BC}
\end{equation}  
Here, $\bol{n}$ denotes the unit outward normal to the boundary $\p\Omega$, 
which is assumed smooth.

In the context of magnetohydrodynamics, $\bol{w}$ represents the magnetic field,
$\cu\bol{w}$ the current density, and $-\chi$ the pressure field. 
System \eqref{MFF} then expresses force balance between magnetic force $-\BN{w}$ and pressure force $-\nabla\chi$ (see for example \cite{Kruskal1958}). 
On the other hand, if the constant density ideal Euler equations of fluid dynamics are considered, 
$\bol{w}$ represents the incompressible flow velocity, $\cu\bol{w}$ 
the vorticity, and $\chi$ the so called Bernoulli head (the sum of squared
velocity and pressure). System \eqref{MFF} then describes
a steady incompressible Euler flow \cite{Moffatt2014}. 
In the following, we will refer to $\chi$ as the pressure field.

Regular solutions of system \eqref{MFF} are highly
desirable in experimental applications involving the
physical containment of fluid and plasma systems. 
At present, the design of magnetic confinement devices is an active
area of research driven by the development of nuclear fusion reactors \cite{Boozer2004,Hudson2018}. Here, the topology and symmetry properties of the solutions  
are a key factor for the stable containment of the burning plasma.
The topology of the domain is particularly important because
it may determine whether solutions of \eqref{MFF} exist or not. 
For example, if the domain is spherical and $\chi$ is a function of the sphere radius, from the hairy ball theorem there is no continuous non-vanishing vector field $\bol{w}$ always tangent to the level sets of constant radius, and the first equation of system \eqref{MFF} cannot be satisfied. 

From a mathematical standpoint, system \eqref{MFF} falls in the category
of elliptic-hyperbolic partial differential equations (see for example \cite{Yoshida1990}). 
The classification of a system of first order partial differential equations for $m$ unknowns $u=\left\{u^1,...,u^m\right\}$ in the form 
\begin{equation}
Lu=\sum_{i=1}^{n}A^{i}u_{i}+B=0, \label{L}
\end{equation}
where $A^i\lr{\bol{x},u}$, $i=1,...,n$, %$u_i=\p u/\p x^i$, 
and $B\lr{\bol{x},u}$ are $m\cp m$ matrices
depending on the variables $\bol{x}=\lr{x^1,...,x^n}$ and the unknowns $u$, 
is determined by the characteristic equation 
\begin{equation}
Q\lr{d\phi}=\det\lr{\sum_{i=1}^n A^i \phi_{i}}=0.\label{Q}
\end{equation}
In this notation, a lower index denotes derivation, e.g. $u_i=\p u/\p x^i$.
Any solution $\phi$ of the characteristic equation \eqref{Q} with a non-vanishing
gradient defines a characteristic surface of the operator $L$. 
For the case of system \eqref{MFF} we have $u=\left\{w^1,w^2,w^3,-\chi\right\}$ 
and
\begin{equation}
\sum_{i=1}^{3}A^i \phi_i=
\begin{bmatrix}
w^2\phi_2+w^3\phi_3& -w^2\phi_1& -w^3\phi_1& -\phi_1\\
-w^1 \phi_2& w^1\phi_1 +w^3 \phi_3& -w^3\phi_2& -\phi_2\\
-w^1 \phi_3& -w^2 \phi_3& w^1 \phi_1+w^2 \phi_2& -\phi_3\\
\phi_1& \phi_2& \phi_3& 0
\end{bmatrix}.
%\begin{split}
%{
%A^1={\begin{bmatrix}
%0 & -w_2 & -w_3 & -1\\ 
%0&w_1&0&0\\
%0&0&w_1&0\\
%1&0&0&0
%\end{bmatrix}},~
%A^2={\begin{bmatrix}
%w_2 & 0 & 0 & 0\\ 
%-w_1&0&-w_3&-1\\
%0&0&w_2&0\\
%0&1&0&0
%\end{bmatrix}},~
%A^3={\begin{bmatrix}
%w_3 & 0 & 0 & 0\\ 
%0&w_3&0&0\\
%-w_1&-w_2&0&-1\\
%0&0&1&0
%\end{bmatrix}}.}
%\end{split}
\end{equation}
The resulting characteristic equation is
\begin{equation}
Q\lr{d\phi}=\lr{\nabla\phi}^2\lr{\bol{w}\cdot\nabla\phi}^2=0.
\end{equation}
Hence, system \eqref{MFF} is twice elliptic 
(the first quadratic term gives the trivial characteristic surface $\nabla\phi=\bol{0}$) and twice hyperbolic (the second quadratic term gives a nontrivial
characteristic surface with double multiplicity $\bol{w}\cdot\nabla\phi=0$).
This twofold nature is the reason why, so far, approaches based on standard analysis have been unsuccessful in answering the general problem of existence of solutions \cite{LoSurdo}. Currently, a rigorous mathematical theory of system \eqref{MFF} is not available.

Solutions of \eqref{MFF} can be divided in two groups: 
Beltrami-type solutions, physically corresponding to a constant pressure field $\nabla\chi=\bol{0}$, and those with a non-vanishing pressure force $\nabla\chi\neq\bol{0}$. % in some region $U\subseteq\Omega$. 
Among Beltrami-type solutions, there are two subclasses: 
trivial Beltrami fields, satisfying $\cu\bol{w}=\bol{0}$, %in the domain $\Omega$, 
and Beltrami fields with a non-vanishing curl $\cu\bol{w}\neq\bol{0}$. %in some region of the domain $\Omega$. 
We shall say that a Beltrami field is nontrivial in a given domain if 
$\bol{w}\neq\bol{0}$ and $\cu\bol{w}\neq\bol{0}$ there (this is equivalent to demanding that the helicity density is non-zero, $\JI{w}\neq 0$, because the curl of a Beltrami field is aligned to the field itself). 
Similarly, it is convenient to classify solutions with a non-zero pressure force in two groups: those solutions that possess at least one continuous Euclidean symmetry (see \cite{Ratcliffe} for the definition of Euclidean symmetry), and those that do not (we will show in section 4 that such solutions exist). The distinction between discrete Euclidean symmetries (reflections) and continuous Euclidean symmetries (translations and rotations) will be explained in section 3.

Of the four subclasses, the first three admit a systematic mathematical treatment. Trivial Beltrami field solutions of system \eqref{MFF} 
together with boundary conditions \eqref{BC} can be obtained 
by solving the Neumann boundary value problem for Laplace's equation:
\begin{equation}
\Delta f=0~~~~{\rm in}~~\Omega,~~~~\nabla f\cdot\bol{n}=0~~~~{\rm on}~~\partial\Omega.
\end{equation}   
Then, the trivial solution is $\bol{w}=\nabla f$. 

The existence of nontrivial Beltrami field solutions in the form
\begin{equation}
\cu\bol{w}=\hat{h}\,\bol{w},\label{BF}
\end{equation}
where the proportionality coefficient $\hat{h}$ is, in general, a function, 
has been proven in \cite{YRot}. In particular, theorem 2 of \cite{YRot}
shows that strong solutions of \eqref{BF} satisfying the boundary conditions
\eqref{BC} exist for any constant proportionality coefficient $\hat{h}\in\mathbb{C}$ in the complex numbers if the domain $\Omega$ is multiply connected; 
the admissible (constant) values of $\hat{h}$ become discrete if the
domain is simply connected. It should be noted that the result of \cite{YRot}
applies to smoothly bounded domains $\Omega$ of arbitrary shape. In particular, 
the corresponding solutions will not exhibit, in general, Euclidean symmetries (to confirm this fact, we will construct explicit examples of nontrivial Beltrami field solutions without continuous Euclidean symmetries in section 3). 
An open mathematical problem remains for the case in which $\hat{h}$ is allowed to be a function rather than a constant, namely the identification of the class of functions $\hat{h}$ that admit a corresponding nontrivial Beltrami field solution of system \eqref{MFF}. A result in this direction was obtained by
\cite{Enciso}, which showed that for $\hat{h}$ in an open and dense subset of $C^k\lr{\Omega}$, $k\geq 7$, the only nontrivial Beltrami field solution of system \eqref{MFF} is $\bol{w}=\bol{0}$.  
Nevertheless, this result does not prevent the existence of Beltrami fields with non-constant $\hat{h}$; indeed, it has been shown in \cite{SatoB1,SatoB2} that an infinite number of such solutions exist, and a systematic method to construct them was derived. This method relies on a     
local Clebsch-like parametrization of nontrivial Beltrami field solutions  stemming from the Lie-Darboux theorem of differential geometry \cite{DeLeon89,Arnold89,Salmon17}.
In particular, given a neighborhood $U$ of a point $\bol{x}\in\Omega$, any nontrivial Beltrami field admits the local representation
\begin{equation}
\bol{w}=\cos x^3\,\nabla x^2+\sin x^3\,\nabla x^1~~~~{\rm in}~~U\label{LBF}
\end{equation}
where $\lr{x^1,x^2,x^3}\in C^{\infty}\lr{U}$ is a smooth
curvilinear coordinate system whose contravariant metric tensor $g^{ij}=\nabla x^i\cdot\nabla x^j$ satisfies:
\begin{subequations}\label{MT}
\begin{align}
\cos x^3 \sin x^3\lr{g^{22}-g^{11}}&=g^{12}\lr{\cos^2 x^3-\sin^2 x^3},\label{MT1}\\
\sin x^3 g^{13}+\cos x^3 g^{23}&=0,\label{MT2}\\
\cos x^3\lr{g^{31}+\Delta x^2}+\sin x^3 \lr{\Delta x^1-g^{32}}&=0.\label{MT3}
\end{align}
\end{subequations} 
Here, equation \eqref{MT3} ensures that the solution is solenoidal, i.e. $\di\bol{w}=0$. 
This condition can be discarded
when considering general Beltrami fields. However, in this paper we will be concerned only with solenoidal solutions. 
System \eqref{MFF} is thus converted into a set
of geometric conditions for the metric tensor. 
A nontrivial Beltrami field solution can then be obtained by finding a coordinate system satisfying system \eqref{MT}. 

The treatment of the third class of solutions requires 
the notion of symmetry. In particular, it is known that an intimate relationship exists between the symmetry properties of the solutions, and their existence. According to a conjecture due to H. Grad, only `highly symmetric' solutions of system \eqref{MFF} should be expected \cite{Grad3}. 
While Grad's idea of symmetry involved considerations on regularity, stability, and boundary conditions, in the context of moder plasma physics it is customary to consider a solution as symmetric if the components $w^i$ of $\bol{w}$, the pressure field $\chi$, and the components of the covariant metric tensor $g_{ij}=\p_{i}\cdot\p_{j}$ admit an ignorable coordinate $x^3$, i.e. their derivative with respect to $x^3$ is always zero. Here, $\p_i$ denotes the tangent vector in the $x^i$ direction.
As we will see in section 3, only the subset of Euclidean isometries 
corresponding to the 
special Euclidean group $SE\lr{3}$ (continuous transformations that preserve distance between points and orientation in Euclidean space, namely translations, rotations, and their combination) are compatible with an ignorable coordinate for the metric tensor. 
%In the following, we shall refer to solutions that do not posses an Euclidean symmetry as asymmetric. 
Under these circumstances, it is always possible to remove the hyperbolic part of system \eqref{MFF}, and reduce it to a single nonlinear elliptic second-order partial differential equation, the Grad-Shafranov equation \cite{Grad1,Grad2,Eden1,Eden2} for the flux function $\Theta$:
\begin{equation}
\Delta\Theta-\nabla\Theta\cdot\nabla\log g_{33}-g_{33}\frac{d\chi}{d\Theta}+\frac{1}{2}\frac{dw_3^2}{d\Theta}-g_{33}w_3\nabla\lr{\frac{\p_3\cp\nabla x^3}{g_{33}}}=0.\label{GSE}
\end{equation}
Upon prescribing the functions $w_{3}$ and $\chi$ as 
functions of $\Theta$, equation
\eqref{GSE} can be solved with corresponding solution of system \eqref{MFF} given by
\begin{equation}
\bol{w}=\nabla\Theta\cp\nabla x^3+w^3\,\p_3.
\end{equation} 
Here, $w^3=g^{3j}w_j$. 
Families of analytic solutions of the Grad-Shafranov equation are known, and
can be constructed by using the symmetry group of the equation \cite{White2009}. 
We also remark that the Grad-Shafranov equation does not apply to Euclidean symmetries involving
reflections, since they fall in the category of discrete symmetries.

Little is known about the fourth class, i.e. solutions of \eqref{MFF} with 
non-vanishing pressure gradients and without continuous Euclidean symmetries.
In \cite{Bruno}, Bruno and Lawrence showed that solutions in toroidal domains without symmetry can be constructed if one is willing to postulate a stepped pressure profile. Then, the solution is of Beltrami-type in regions where the pressure gradient vanishes. These regions are divided by nested flux surfaces in correspondence of the pressure `jumps'. Across such surfaces total pressure balance is satisfied (the sum of mechanical and magnetic pressure is constant across the interfaces). 
Numerical solutions of \eqref{MFF} in terms of stepped pressure equilibria 
are routinely employed in the design and study of fusion reactors (stellarators) whose shape do not exhibit continuous Euclidean symmetries \cite{Hudson2012,Hudson2017}. 
At present, the question remains open whether solutions without continuous Euclidean symmetries and with non-vanishing 
pressure gradients exist beyond the stepped pressure case, which represents
the `minimal' departure from Beltrami-type solutions since the pressure force
is non-vanishing only over a set of measure zero. 
The work of Weitzner \cite{Weitzner} supports the possibility that
such solutions do exist. Indeed, in \cite{Weitzner}, a generalized Grad-Shafranov equation accounting for solutions without continuous Euclidean symmetries is derived (although the equation holds locally, it is not elliptic, and it is not the result of a reduction by symmetry), and potential solutions without continuous Euclidean symmetries are expressed in the form of an expansion in a small parameter measuring the departure from a symmetric torus. 

The purpose of the present paper is to establish the existence of solutions
of system \eqref{MFF} that do not possess continuous Euclidean symmetries, i.e. 
solutions that are not invariant under translations, rotations, or their combination. In particular, we will provide examples of smooth
solutions of system \eqref{MFF} with both vanishing and non-vanishing pressure gradients and without continuous Euclidean symmetries in bounded domains, and show that square integrable solutions of the boundary value problem \eqref{MFF}, \eqref{BC} with non-vanishing pressure gradients and  without continuous Euclidean symmetries exist. 
 
The paper is organized as follows. 
In section 2 we review the notion of symmetry and show that
all nontrivial solutions of system \eqref{MFF} are locally symmetric, 
although the symmetry is not necessarily an Euclidean isometry. 
In section 3 we first introduce Euclidean symmetries.
Then, we discuss the class of proportionality coefficients $\hat{h}$ which admit a corresponding Beltrami field, and 
construct nontrivial smooth Beltrami field solutions of system
\eqref{MFF} without continuous Euclidean symmetries in bounded domains.
By using an appropriate Clebsch parametrization, in section 4 we construct classes of smooth solutions of system \eqref{MFF} 
with
non-vanishing pressure gradients and 
without continuous Euclidean symmetries in bounded domains. Then, we prove the existence of square integrable solutions of system \eqref{MFF} with boundary conditions \eqref{BC}, with non-vanishing pressure gradients, and without continuous Euclidean symmetries. This is achieved by separating the bounded domain into a central region and
a peripheral region. Then, solutions with non-vanishing pressure gradients and without continuous Euclidean symmetries in the central region are combined with Beltrami-type solutions in the peripheral region so that boundary conditions are satisfied. 
Finally, in section 5 we discuss the properties of the commutator of curl operator and Lie derivative %with respect to continuous Euclidean symmetries 
when applied to Beltrami fields.
These properties give rise to symmetries of the Beltrami equation that can be used to construct new solutions 
without continuous Euclidean symmetries from known ones.   
Conclusions are drawn in section 6. 

\section{Symmetric solutions}

Let $\Omega\subset\mathbb{R}^3$ denote a smoothly bounded domain
with boundary $\p\Omega$. 
For the purpose of the present paper, we shall refer to a property as `local'
in the sense that it holds in a neighborhood $U$ of a chosen point of interest $\bol{x}\in\Omega$.   
Let $T$ be a $p$ times covariant and $q$ times
contravariant tensor in $\Omega$. In the following, we adopt the definition of symmetry for the tensor $T$ below:
\begin{mydef}
The tensor $T$ is symmetric in $\Omega$ with respect to a vector field $\bol{\xi}\in T\Omega$ if 
\begin{equation}
\mf{L}_{\bol{\xi}}T=0~~~~{\rm in}~~\Omega. \label{Sym}
\end{equation}
Here, $T\Omega$ denotes the tangent space of $\Omega$ and $\mf{L}$ the Lie-derivative. 
\end{mydef}

\noindent Below, we refer to the vector field $\bol{\xi}$ as a symmetry of the tensor $T$. 
When $T$ is a vector field, i.e. $T=\bol{v}$ with $\bol{v}\in T\Omega$, equation
\eqref{Sym} becomes
\begin{equation}
\mf{L}_{\bol{\xi}}\bol{v}=\xi^i\p_i v^j\p_j-v^i\p_i \xi^j \p_j=\lr{\bol{\xi}\cdot\nabla}\bol{v}-\lr{\bol{v}\cdot\nabla}\bol{\xi}=\bol{0}~~~~{\rm in}~~\Omega.\label{Sym2}
\end{equation}
Next, suppose that both the direction of symmetry $\bol{\xi}$ and
the vector field $\bol{v}$ are solenoidal, i.e. $\nabla\cdot\bol{\xi}=\nabla\cdot\bol{v}=0$ in $\Omega$.
Then, from the vector identity
\begin{equation}
\cu\lr{\bol{v}\cp\bol{\xi}}=\lr{\nabla\cdot\bol{\xi}}\bol{v}-\lr{\nabla\cdot\bol{v}}\bol{\xi}
+\lr{\bol{\xi}\cdot\nabla}\bol{v}-\lr{\bol{v}\cdot\nabla}\bol{\xi}, 
\end{equation}
equation \eqref{Sym2} reduces to
\begin{equation}
\mf{L}_{\bol{\xi}}\bol{v}=\cu\lr{\bol{v}\cp\bol{\xi}}=\bol{0}~~~~{\rm in}~~\Omega.\label{Sym3}
\end{equation}
We have the following:

\begin{proposition}\label{prop1}
Assume $\bol{w}\neq\bol{0}$ and $\cu\bol{w}\neq \bol{0}$ in $\Omega$. Then, all smooth solutions $\bol{w}\in C^{\infty}\lr{\Omega}$ of \eqref{MFF} are locally symmetric. Furthermore, the local symmetry $\bol{\xi}$ can always be chosen as solenoidal, i.e. $\di\bol{\xi}=0$. 
\end{proposition}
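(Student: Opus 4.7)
The plan is to construct an explicit smooth, solenoidal, nowhere vanishing vector field $\bol{\xi}$ on the whole of $\Omega$ satisfying $\mathfrak{L}_{\bol{\xi}}\bol{w}=\bol{0}$; this automatically furnishes a local (in fact global) symmetry of the required type, with solenoidality built in by construction. My candidate is simply
\[
\bol{\xi} := \nabla\times\bol{w}.
\]

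First I would verify that $\bol{\xi}$ is solenoidal, which is automatic since $\nabla\cdot(\nabla\times\bol{w})\equiv 0$. Because $\bol{w}$ is also divergence-free by \eqref{MFF}, the identity \eqref{Sym3} applies and converts the Lie derivative into a curl:
\[
\mathfrak{L}_{\bol{\xi}}\bol{w} \;=\; \nabla\times(\bol{w}\times\bol{\xi}) \;=\; \nabla\times\bigl(\bol{w}\times(\nabla\times\bol{w})\bigr) \;=\; \nabla\times\nabla\chi \;=\; \bol{0},
\]
where the second equality uses the first equation of \eqref{MFF} and the last is the vanishing of the curl of a gradient. A brief regularity remark is in order: although we only assumed $\chi\in C^{1}(\Omega)$, the smoothness of $\bol{w}$ promotes $\nabla\chi=\bol{w}\times(\nabla\times\bol{w})$ to a $C^{\infty}$ vector field, which is more than enough for the pointwise identity $\nabla\times\nabla\chi=\bol{0}$. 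The hypothesis $\nabla\times\bol{w}\neq\bol{0}$ in $\Omega$ then guarantees that $\bol{\xi}$ is nowhere vanishing, so this is a genuine nontrivial symmetry.

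There is essentially no obstacle on this route: the argument collapses to combining two elementary vector identities with equation~\eqref{MFF}. The only conceptual subtlety I would flag in the writeup is that in the Beltrami regime $\nabla\chi\equiv\bol{0}$ the candidate reduces to $\bol{\xi}=\hat{h}\,\bol{w}$, so the symmetry becomes colinear with $\bol{w}$ itself and its flow merely reparametrizes the integral curves of $\bol{w}$. Producing a symmetry transverse to $\bol{w}$ in that case would require a separate construction, for instance via the local Clebsch-type parametrization \eqref{LBF}; but Proposition~\ref{prop1} as stated does not demand independence from $\bol{w}$, so the single choice $\bol{\xi}=\nabla\times\bol{w}$ handles the Beltrami and non-Beltrami cases uniformly and establishes both claims at once.
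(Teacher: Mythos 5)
Your argument is correct for the statement as literally written, and in the regime $\nabla\chi\neq\bol{0}$ it is precisely the paper's own argument: there the paper also takes $\bol{\xi}=\cu\bol{w}$ and observes that system \eqref{MFF} is equivalent to $\mf{L}_{\cu\bol{w}}\bol{w}=\cu\lr{\bol{w}\cp\lr{\cu\bol{w}}}=\cu\nabla\chi=\bol{0}$. The genuine divergence is in the Beltrami regime. There your candidate collapses to $\bol{\xi}=\hat{h}\,\bol{w}$, which is collinear with $\bol{w}$; since \emph{every} solenoidal field satisfies $\mf{L}_{\bol{w}}\bol{w}=[\bol{w},\bol{w}]=\bol{0}$, a collinear choice renders that half of the proposition essentially content-free (the only information carried by $\mf{L}_{\hat{h}\bol{w}}\bol{w}=\bol{0}$ is that $\bol{w}\cdot\nabla\hat{h}=0$, which is automatic for solenoidal Beltrami fields). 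The paper's Remark 1 explicitly labels the collinear choice the ``trivial symmetry,'' and the bulk of its proof is devoted to avoiding it: using the local Clebsch representation $\bol{w}=\cos\theta\,\nabla\psi+\sin\theta\,\nabla\ell$ from \cite{SatoB1}, it solves a first-order linear transport equation by characteristics to produce a family of solenoidal symmetries satisfying $\bol{w}\cp\bol{\xi}=\nabla g$ for an arbitrary function $g\lr{L_{\theta},\theta}$, hence symmetries that are non-vanishing and \emph{transverse} to $\bol{w}$. You correctly flag this subtlety yourself, and your uniform choice does have the minor advantage of handling seamlessly the neighborhoods straddling the interface between $\nabla\chi=\bol{0}$ and $\nabla\chi\neq\bol{0}$ regions, which the paper treats by patching. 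But if the proposition is read with the intent the rest of the paper assigns to it (a symmetry that is not merely a reparametrization along the field lines of $\bol{w}$), your proof establishes strictly less in the force-free case, and the separate construction you defer to would in fact be required.
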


\begin{proof}
There are two cases. First, suppose that $\nabla\chi=\bol{0}$ in $\Omega$. 
Then, $\bol{w}$ is a Beltrami-type solution such that $\cu\bol{w}=\hat{h}\,\bol{w}$ for some proportionality coefficient $\hat{h}$. 
According to theorem 1 of \cite{SatoB1}, for every $\bol{x}\in\Omega$ there exists
a neighborhood $U\subset\Omega$ of $\bol{x}$ and smooth curvilinear coordinates $\lr{\ell,\psi,\theta}\in C^{\infty}\lr{U}$ such that
\begin{equation}
\bol{w}=\cos\theta\,\nabla\psi+\sin\theta\,\nabla\ell=w^2\lr{\cos\theta\,\p_{\psi}+\sin\theta\,\p_{\ell}}~~~~{\rm in}~~U.\label{w}
\end{equation}
In this notation $w^{2}=\sqrt{\bol{w}\cdot\bol{w}}$ is the Euclidean norm of $\bol{w}$.
For every point $\bol{x}\in\Omega$ we want to find a neighborhood $V$ of $\bol{x}$ such that $\bol{\xi}$ is a symmetry of $\bol{w}$ in $V$ with $\di\bol{\xi}=0$.   
We decompose $\bol{\xi}$ on the local tangent basis $\lr{\p_{\ell},\p_{\psi},\p_{\theta}}$ as follows:
\begin{equation}
\bol{\xi}=\hat{h}\lr{\alpha\p_{\ell}+\beta\p_{\psi}+\gamma\p_{\theta}}~~~~{\rm in}~~U.\label{xi}
\end{equation}
Here, $\alpha$, $\beta$, and $\gamma$ are functions to be determined.
Substituting \eqref{w} and \eqref{xi} into equation \eqref{Sym3}, we obtain
\begin{equation}
\begin{split}
\mf{L}&_{\bol{\xi}}\bol{w}=\\
&\cu\left\{\hat{h}w^2\left[\lr{\beta\sin\theta-\alpha\cos\theta}\p_{\ell}\cp\p_{\psi}+\gamma\cos\theta\p_{\psi}\cp\p_{\theta}-\gamma\sin\theta\p_{\theta}\cp\p_{\ell}\right]\right\}.
\end{split}
\end{equation}
Next, observe that the Jacobian $J$ of the coordinate change $\lr{x,y,z}\mapsto\lr{\ell,\psi,\theta}$ is given by
\begin{equation}
J^{-1}=\nabla\ell\cdot\nabla\psi\cp\nabla\theta=\frac{1}{\p_{\ell}\cdot\p_{\psi}\cp\p_{\theta}}=\JI{w}=\hat{h}w^{2}.
\end{equation}
Hence, 
\begin{equation}
\mf{L}_{\bol{\xi}}\bol{w}=\cu\left[\lr{\beta\sin\theta-\alpha\cos\theta}\nabla\theta+\gamma\cos\theta\nabla\ell-\gamma\sin\theta\nabla\psi\right].\label{LS3}
\end{equation}
For the righ-hand side of \eqref{LS3} to vanish, it is necessary that the argument of the curl is the gradient of some function $\rho$.
Thus, we must solve the following system for the unknowns $\lr{\rho,\alpha,\beta,\gamma}$:
\begin{subequations}
\begin{align}
\frac{\p\rho}{\p\ell}&=\gamma\cos\theta,\label{r1}\\
\frac{\p\rho}{\p\psi}&=-\gamma\sin\theta,\label{r2}\\
\frac{\p\rho}{\p\theta}&=\beta\sin\theta-\alpha\cos\theta,\label{r3}\\
\di\bol{\xi}&=\hat{h}w^2\left[\frac{\p}{\p\ell}\lr{\frac{\alpha}{w^2}}+\frac{\p}{\p\psi}\lr{\frac{\beta}{w^2}}+\frac{\p}{\p\theta}\lr{\frac{\gamma}{w^2}}\right]=0.\label{r4}
\end{align}
\end{subequations}
The last equation follows from the identity 
\begin{equation}
\mf{L}_{\bol{\xi}}dx\w dy\w dz=\lr{\di\bol{\xi}}dx\w dy\w dz=\mf{L}_{\bol{\xi}}\frac{d\ell\w d\psi\w d\theta}{\hat{h}w^2}.
\end{equation}
On the other hand, in \cite{SatoB1} it is shown that
\begin{equation}
\hat{h}=\hat{h}\lr{\theta,L_{\theta}},~~~~L_{\theta}=\ell\cos\theta-\psi\sin\theta.
\end{equation}
This is a consequence of the fact that, locally, $\bol{w}$ can be expressed as $\bol{w}=\hat{h}^{-1}\nabla\theta\cp\nabla L_{\theta}$, and, since $\bol{w}$ is solenoidal, $\bol{w}\cdot\nabla\hat{h}=\hat{h}^{-1}\nabla\theta\cp\nabla L_{\theta}\cdot\nabla\hat{h}=0$.
It follows that
\begin{equation}
\frac{\p\hat{h}}{\p\ell}=\frac{\p\hat{h}}{\p L_{\theta}}\cos\theta,~~~~\frac{\p\hat{h}}{\p\psi}=-\frac{\p\hat{h}}{\p L_{\theta}}\sin\theta.
\end{equation}
Then, one sees that by setting $\rho=g{(L_{\theta},\theta)}$, 
where $g$ is an arbitrary function of the variables $L_{\theta}$ and $\theta$, both \eqref{r1} and \eqref{r2}
are satisfied with
\begin{equation}
 \gamma=\frac{\p g}{\p L_{\theta}}.
\end{equation}
From equation \eqref{r3}, we further have
\begin{equation}
\beta=\frac{\frac{\p g}{\p\theta}+\alpha\cos\theta}{\sin\theta}.
\end{equation}
This equaton will determine $\beta$ once $\alpha$ is known.
The function $\alpha$ can be obtained as a solution of \eqref{r4}, which, after some manipulations, reads as
\begin{equation}
\begin{split}
\frac{\p\alpha}{\p\ell}&+\frac{\cos\theta}{\sin\theta}\frac{\p\alpha}{\p\psi}-2\alpha\lr{\frac{\p\log w}{\p\ell}+\frac{\cos\theta}{\sin\theta}\frac{\p\log w}{\p\psi}}
+\\&-\frac{2}{\sin\theta}\frac{\p g}{\p\theta}\frac{\p\log w}{\p\psi}+\frac{1}{\sin\theta}\frac{\p^2 g}{\p\psi\p\theta}+\frac{\p}{\p\theta}\lr{\frac{\p g}{\p L_{\theta}}}-2\frac{\p g}{\p L_{\theta}}\frac{\p\log w}{\p\theta}=0.\label{r4_2}
\end{split}
\end{equation}
Notice that here $\p g/\p L_{\theta}$ is the partial derivative of $g$ with respect to $L_{\theta}$ when
$g$ is intended as a function of $\theta$ and $L_{\theta}$.
Equation \eqref{r4_2} is a first order linear partial differential equation for the variable $\alpha$.
Hence, solutions exist at least locally and can be obtained through the method of characteristics. 
We conclude that for all $\bol{x}\in\Omega$ there exist a neighborhood $V\subset\Omega$ of $\bol{x}$ and a vector field $\bol{\xi}\in TV$ such that
\begin{equation}
\bol{\xi}=\hat{h}\left[\alpha\p_{\ell}+\frac{\frac{\p g}{\p\theta}+\alpha\cos\theta}{\sin\theta}\p_{\psi}+\frac{\p g}{\p L_{\theta}}\p_{\theta}\right]~~~~{\rm in}~~V,\label{xi2}
\end{equation}
with $\alpha$ a solution of \eqref{r4_2} and 
\begin{equation}
\mf{L}_{\bol{\xi}}\bol{w}=0~~~~{\rm in}~~V.
\end{equation}
Observe that the symmetry $\bol{\xi}$ can always be chosen to be non-vanishing and not aligned with the solution $\bol{w}$ 
since, by construction, $\bol{w}\cp\bol{\xi}=\nabla g$ with $g$ an arbitrary function of $L_\theta$ and ${\theta}$.  
 
The second case occurs when $\nabla\chi\neq\bol{0}$ in some open subset $U\subset\Omega$.
Notice that $U$ must be open, since otherwise $\nabla\chi$ is discontinuous and the
smoothness of $\bol{w}\cp\lr{\cu\bol{w}}$ is violated. 
Furthermore, both $\bol{w}$ and $\cu\bol{w}$ are solenoidal.
Hence, recalling equation \eqref{Sym3}, we see that system \eqref{MFF} is equivalent to
\begin{equation}
\mf{L}_{\cu\bol{w}}\bol{w}=\bol{0},~~~~{\rm \di\bol{w}}=0~~~~{\rm in}~~\Omega.\label{MFFL} 
\end{equation}
Since $\nabla\chi\neq\bol{0}$ in $U$, it readily follows that $\bol{\xi}=\cu\bol{w}$ is the desired
symmetry in $U$. Notice that this remains true even if $\nabla\chi$ is allowed to vanish
over a set of measure zero in $U$ because $\cu\bol{w}$, and thus $\bol{\xi}$, are smooth in $U$. 
%the smoothness of $\bol{w}$ guarantees
%the continuity of $\BN{w}$ in $U$. 
Finally, an analogous argument holds for neighborhoods $U$ built around 
points situated at the boundary between regions with $\nabla\chi=\bol{0}$ and
regions with $\nabla\chi\neq\bol{0}$. Indeed, denoting by $U_0$ the open regions in $U$ with
$\nabla \chi=\bol{0}$ and setting $U_1=U-U_{0}$, we can choose $\bol{\xi}$ to be given
by \eqref{xi2} in $U_0$ and by $\cu\bol{w}$ in $U_1$.

%\item
%\end{enumerate} 
\end{proof}

\begin{remark}
Notice that \eqref{xi2} includes the trivial symmetry $\bol{\xi}=\bol{w}$, which can
be obtained for $g=0$ and $\alpha=\hat{h}^{-1}w^2\sin\theta$. 
\end{remark}

\begin{remark}
If $\nabla\chi\neq\bol{0}$ almost everywhere in $\Omega$, from equation
\eqref{MFFL}, it follows that $\cu\bol{w}$ is a symmetry of $\bol{w}$ in the whole $\Omega$.
\end{remark}

\noindent In the following examples we show how to
evaluate the directions of symmetry $\bol{\xi}$ 
for some specific Beltrami fields.

\begin{example} 
Set $\lr{\ell,\psi,\theta}=\lr{x,y,z}$, and consider the `minimal' ABC flow 
\begin{equation}
\bol{w}=\cos z\,\nabla y+\sin z\,\nabla x.\label{mABC}
\end{equation}
We refer the reader to \cite{Dombre} for a discussion on the properties of ABC flows. Notice that the vector field \eqref{mABC} satisfies $\di\bol{w}=0$ and $\hat{h}=w=1$. Set $g=g\lr{z}$. Recalling \eqref{r4_2}, $\alpha$ must satisfy the equation
\begin{equation}
\frac{\p\alpha}{\p x}+\frac{\cos z}{\sin z}\frac{\p\alpha}{\p y}
%+\frac{1}{\sin\theta}\frac{\p^2 g}{\p y\p z}+\frac{\p}{\p z}\lr{\frac{\p g}{\p L_{z}}}
=0.\label{a_c}
\end{equation}
The solution of \eqref{a_c} can be computed by the method of characteristics:
\begin{equation}
\alpha=p\lr{z,y-\frac{\cos z}{\sin z}x},
\end{equation}
with $p$ an arbitrary function of $z$ and $y-\frac{\cos z}{\sin z}x$. %+q\lr{z}$.%,
%and $q$ an arbitrary function of $z$.
Then, according to \eqref{xi2}, the direction of symmetry is given by the vector field
\begin{equation}
\bol{\xi}=p\,\p_x+\frac{\frac{\p g}{\p z}+p\cos z}{\sin z}\p_y.
\end{equation}
Since $p$ and $g$ are arbitrary, we can identify 
two important symmetries by setting $\lr{p,g}=\lr{1,-\sin z}$ and $\lr{p,g}=\lr{0,-\cos z}$ respectively.
These choices give the translational symmetries
\begin{equation}
\bol{\xi}_{1}=\p_x,~~~~\bol{\xi}_{2}=\p_y.
\end{equation}
\end{example}

\begin{example}
Let $\lr{r,z,\vartheta}=\lr{\sqrt{x^2+y^2},z,\arctan\lr{y/x}}$ denote a cylindrical coordinate system.
Set $\lr{\ell,\psi,\theta}=\lr{\vartheta,\log r,z}$, and consider the cylindrical Beltrami field
\begin{equation}
\bol{w}=\B{\vartheta}{\log r}{z}.
\end{equation}
Notice that $\di\bol{w}=0$, $\hat{h}=-1$, $w=1/r$. Set $g=g\lr{z}$.
Recalling \eqref{r4_2}, $\alpha$ must satisfy the equation
\begin{equation}
\frac{\p \alpha}{\p \vartheta}+\frac{\cos z}{\sin z}\frac{\p \alpha}{\p\log r}+2\frac{\alpha \cos z+\frac{\p g}{\p z}}{\sin z}=0.
\end{equation}
By the method of characteristics, one finds the solution
\begin{equation}
%\begin{split}
\alpha=\frac{1}{2}\frac{\sin z}{\cos z}p\lr{z,\log r-\frac{\cos z}{\sin z}\lr{\vartheta-q\lr{z}}}\exp\left\{2\frac{\cos z}{\sin z}\lr{q\lr{z}-\vartheta}\right\}-\frac{\frac{\p g}{\p z}}{\cos z}.
%\end{split}
\end{equation}
Here, $p$ is an arbitrary function of $z$ and $\log r-\frac{\cos z}{\sin z}\lr{\vartheta-q}$, and $q$ an arbitrary function of $z$.
The direction of symmetry is thus
\begin{equation}
\begin{split}
\bol{\xi}=&\lr{\frac{1}{2}\frac{\sin z}{\cos z}p\exp\left\{2\frac{\cos z}{\sin z}\lr{q-\vartheta}\right\}-\frac{\frac{\p g}{\p z}}{\cos z}}\p_{\vartheta}
\\&+\frac{1}{2}p\exp\left\{2\frac{\cos z}{\sin z}\lr{q-\vartheta}\right\}\p_{\log r}.
\end{split}
\end{equation}  
By setting $p=0$ and $g=-\sin z$, one obtains the
symmetry of rotation around the $z$-axis:
\begin{equation}
\bol{\xi}=\p_{\theta}.
\end{equation}
\end{example}

\begin{example}\label{ex3}
Set $\lr{\ell,\psi,\theta}=\lr{e^x\sin y,-e^x \cos y,z^2}$ and consider the Beltrami field
\begin{equation}
\bol{w}=-\cos\lr{ z^2}\,\nabla\lr{e^x\cos y}+\sin\lr{ z^2}\,\nabla\lr{e^x\sin y}.
\end{equation}
Notice that $\di\bol{w}=0$, $\hat{h}=2z$, $w=e^x=\sqrt{\ell^2+\psi^2}$. Set $g=g\lr{\theta}$.
Then, from $\eqref{r4_2}$, $\alpha$ must satisfy the equation
\begin{equation}
\frac{\p\alpha}{\p\ell}+\frac{\cos \theta}{\sin \theta}\frac{\p\alpha}{\p \psi}-\frac{2\alpha}{\ell^2+\psi^2}\lr{\ell+\frac{\cos \theta}{\sin \theta}\psi}-\frac{2}{\sin \theta}\frac{\p g}{\p \theta}\frac{\psi}{\ell^2+\psi^2}=0.
\end{equation}
By the method of characteristics, one can construct the solution
\begin{equation}
\alpha=\lr{\ell^2+\psi^2}p\lr{\theta,\psi-\frac{\cos\theta}{\sin\theta}\ell}+\frac{\frac{\p g}{\p\theta}}{\sin\theta}\frac{\ell\lr{\psi-\frac{\cos\theta}{\sin\theta}\ell}+\lr{\ell^2+\psi^2}\arctan\lr{\frac{\ell}{\psi}}}{\lr{\psi-\frac{\cos \theta}{\sin\theta}\ell}^2}.
\end{equation}
Here, $p$ is an arbitrary function of $\theta$ and $\psi-\frac{\cos\theta}{\sin\theta}\ell$.
The direction of symmetry is thus given by
\begin{equation}
\begin{split}
\bol{\xi}=&2\sqrt{\theta}\left[\lr{\ell^2+\psi^2}p+\frac{\frac{\p g}{\p\theta}}{\sin\theta}\frac{\ell\lr{\psi-\frac{\cos\theta}{\sin\theta}\ell}+\lr{\ell^2+\psi^2}\arctan\lr{\frac{\ell}{\psi}}}{\lr{\psi-\frac{\cos \theta}{\sin\theta}\ell}^2}\right]\p_{\ell}\\&+
2\sqrt{\theta}\frac{\frac{\p g}{\p\theta}+\left[\lr{\ell^2+\psi^2}p+\frac{\frac{\p g}{\p\theta}}{\sin\theta}\frac{\ell\lr{\psi-\frac{\cos\theta}{\sin\theta}\ell}+\lr{\ell^2+\psi^2}\arctan\lr{\frac{\ell}{\psi}}}{\lr{\psi-\frac{\cos \theta}{\sin\theta}\ell}^2}\right]\cos\theta}{\sin\theta}\p_{\psi}.
\end{split}
\end{equation}
\end{example}

In proposition \ref{prop1} we have shown that all solutions of \eqref{MFF} are locally symmetric.
Furthermore, the local symmetry is solenoidal, i.e. $\di\bol{\xi}=0$. 
However, it is important to stress that the symmetry $\bol{\xi}$ is not necessarily
an Euclidean isometry. We will discuss this fact in sections 3 and 4 with specific examples.
The purpose of the remaining part of this section is to obtain a local representation (Clebsch-like parametrization) valid for any solution of system \eqref{MFF} by exploiting the symmetry $\bol{\xi}$.
We have the following:

\begin{proposition}\label{prop2}
Let $\bol{w}\in C^{\infty}\lr{\Omega}$ denote a smooth solution of system \eqref{MFF} 
with smooth solenoidal symmetry $\bol{\xi}\in C^{\infty}\lr{\Omega}$, $\bol{\xi}\neq\bol{0}$ in $\Omega$. 
%Assume $\bol{w}\neq\bol{0}$ and $\cu\bol{w}\neq\bol{0}$ in $\Omega$.
Then, for every point $\bol{x}\in\Omega$, 
there exist a neighborhood $U$ of $\bol{x}$
and local curvilinear coordinates $\lr{x^1,x^2,x^3}\in C^{\infty}\lr{U}$ such that
\begin{equation}
\bol{w}=\nabla\Theta\cp\nabla x^3+w^3 \p_3~~~~{\rm in}~~U,\label{wsymloc}
\end{equation}
where $w^3=\bol{w}\cdot\nabla x^3=w^3\lr{x^1,x^2}$ and $\Theta=\Theta\lr{x^1,x^2}$ are
functions of $x^1$ and $x^2$.
Furthermore, if $\nabla\chi\neq\bol{0}$ and $\bol{\xi}=\cu\bol{w}$ in $U$, there exist a function $\Psi=\Psi\lr{x^1,x^2}$ of $x^1$ and $x^2$, and a function $\Phi=\Phi\lr{x^1,x^2,x^3}$ such that
\begin{subequations}
\begin{align}
%\nabla\Theta\cp\nabla\Psi&=\nabla x^1\cp\nabla x^2,\\
\Psi_1\Theta_2-\Psi_2\Theta_1&=1,\label{PsiTheta}\\
-\frac{\nabla\Theta\cp\lr{\nabla\Phi\cp\nabla\Theta}}{\abs{\nabla\Theta}^2}\cdot\nabla\lr{\frac{\nabla\Theta\cdot\nabla\Phi}{\abs{\nabla\Theta}^2}}&=1.\label{GGSE}
\end{align}
\end{subequations}
and
\begin{equation}
\bol{w}=\Psi\,\nabla\Theta+\nabla\Phi,~~~~\chi=-\Theta~~~~{\rm in}~~U.
\end{equation}
\end{proposition}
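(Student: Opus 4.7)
The proof has two parts mirroring the statement. First, use the symmetry $\bol{\xi}$ to pass to adapted coordinates in which $\bol{w}$ acquires the flux form \eqref{wsymloc}. Second, when $\bol{\xi}=\cu\bol{w}$, refine this into a Clebsch decomposition $\bol{w}=\Psi\nabla\Theta+\nabla\Phi$ and convert the force balance into the scalar equation \eqref{GGSE}.

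For the first part, the flow-box theorem applied to $\bol{\xi}\neq\bol{0}$ produces local coordinates $(x^1,x^2,x^3)$ with $\bol{\xi}=\p_3$. Writing $\bol{\xi}=J\,\nabla x^1\times\nabla x^2$ with $J=(\nabla x^1\cdot\nabla x^2\times\nabla x^3)^{-1}$, the hypothesis $\di\bol{\xi}=0$ forces $\p_3 J=0$, so $J=J(x^1,x^2)$. A 2D change of variables acting only on $(x^1,x^2)$ with Jacobian equal to $J$ (for instance $y^1=x^1$, $y^2=\int_0^{x^2}J(x^1,s)\,ds$) then normalizes $J$ to unity while preserving $\bol{\xi}=\p_3$, since the change is independent of $x^3$. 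With this normalization, $\mf{L}_{\bol{\xi}}\bol{w}=0$ forces the contravariant components $w^i$ to depend only on $(x^1,x^2)$, and $\di\bol{w}=0$ reduces to the planar divergence equation $\p_1 w^1+\p_2 w^2=0$. The 2D Poincaré lemma yields a stream function $\Theta(x^1,x^2)$ with $w^1=\Theta_2$, $w^2=-\Theta_1$; recognizing $\Theta_2\p_1-\Theta_1\p_2=\nabla\Theta\times\nabla x^3$ (using $J=1$) gives \eqref{wsymloc}, with $w^3=w^3(x^1,x^2)$.

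For the second part, assume $\bol{\xi}=\cu\bol{w}$ and $\nabla\chi\neq\bol{0}$. A vector triple-product computation gives $\bol{w}\times\cu\bol{w}=(\nabla\Theta\times\nabla x^3)\times\p_3=-\nabla\Theta$, so MFF yields $\chi=-\Theta$ up to an absorbable constant, and in particular $\nabla\Theta\neq\bol{0}$ in $U$. The Poisson-bracket PDE $\Psi_1\Theta_2-\Psi_2\Theta_1=1$ admits a local smooth solution $\Psi(x^1,x^2)$ since, in coordinates rectifying $\Theta$, it reduces to a first-order linear ODE; this gives \eqref{PsiTheta}. Because $\nabla\Psi\times\nabla\Theta=(\Psi_1\Theta_2-\Psi_2\Theta_1)\nabla x^1\times\nabla x^2=\p_3=\cu\bol{w}$, the 1-form $\bol{w}-\Psi\nabla\Theta$ is closed and locally equals $\nabla\Phi$ for some $\Phi(x^1,x^2,x^3)$. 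To derive \eqref{GGSE}, substitute the Clebsch pair into MFF and expand $\bol{w}\times\cu\bol{w}=(\Psi\nabla\Theta+\nabla\Phi)\times(\nabla\Psi\times\nabla\Theta)$ by the BAC-CAB rule; the coefficient of $\nabla\Psi$ must vanish, which together with the identity $\bol{w}\cdot\nabla\chi=0$ gives $\Psi=-\nabla\Theta\cdot\nabla\Phi/|\nabla\Theta|^2$, while the coefficient of $\nabla\Theta$ reduces to the scalar identity $\bol{w}\cdot\nabla\Psi=1$. Rewriting the latter via $\bol{w}\cdot\nabla\Psi=\nabla\Psi\cdot(\nabla\Phi)_\perp$ with $(\nabla\Phi)_\perp=\nabla\Theta\times(\nabla\Phi\times\nabla\Theta)/|\nabla\Theta|^2$, and substituting $\nabla\Psi=-\nabla(\nabla\Theta\cdot\nabla\Phi/|\nabla\Theta|^2)$, reproduces \eqref{GGSE} exactly.

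The main delicate step is the 2D rescaling that sets $J\equiv 1$: one must verify that it leaves $\bol{\xi}=\p_3$ intact, which holds precisely because the change involves neither $x^3$ nor affects $\p/\p x^3$. Everything else reduces to two Poincaré lemmas (in 2D for $\Theta$, in 3D for $\Phi$), local solvability of a first-order linear PDE for $\Psi$ (straightforward where $\nabla\Theta\neq\bol{0}$), and a short algebraic manipulation to identify the scalar equation \eqref{GGSE}.
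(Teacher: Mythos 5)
Your proposal is correct and follows essentially the same route as the paper: adapted coordinates with unit Jacobian built from the solenoidal symmetry (your flow-box-plus-2D-rescaling argument is just an explicit proof of the Lie--Darboux normal form $\bol{\xi}=\nabla x^1\cp\nabla x^2$ that the paper cites), a stream function $\Theta$ from planar incompressibility, the identity $\BN{w}=-\nabla\Theta$ giving $\chi=-\Theta$, local solvability of $\Psi_1\Theta_2-\Psi_2\Theta_1=1$, and decurling to get $\Phi$. The only cosmetic difference is the last step: you obtain \eqref{GGSE} by equating the $\nabla\Psi$ and $\nabla\Theta$ coefficients in the expansion of $\lr{\Psi\nabla\Theta+\nabla\Phi}\cp\lr{\nabla\Psi\cp\nabla\Theta}$ (which yields $\bol{w}\cdot\nabla\Theta=0$, $\Psi=-\nabla\Theta\cdot\nabla\Phi/\abs{\nabla\Theta}^2$, and $\bol{w}\cdot\nabla\Psi=1$), whereas the paper substitutes the projected form of $\bol{w}$ directly into $\BN{w}$; both yield the identical scalar equation.
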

  
\begin{proof}
Since the direction of symmetry $\bol{\xi}$ is smooth, solenoidal, and non-vanishing in $\Omega$, 
from the Lie-Darboux theorem for every point $\bol{x}\in\Omega$ there exist 
a neighborhood $U$ of $\bol{x}$ and smooth functions $\sigma$ and $\tau$ such that 
\begin{equation}
\bol{\xi}=\nabla \sigma\cp\nabla \tau~~~~{\rm in}~~U.
\end{equation}
We set $x^1=\sigma$ and $x^2=\tau$. 
The intersections of the level sets of $x^1$ and $x^2$ define curves whose tangent
vector $\nabla\sigma\cp\nabla\tau$ can be used as the tangent vector $\p_3$ of the third coordinate $x^3$, i.e.
\begin{equation}
\p_3=\nabla\sigma\cp\nabla\tau~~~~{\rm in}~~U.
\end{equation}
In this way, we have specified the local coordinate system $\lr{x^1,x^2,x^3}\in C^{\infty}\lr{U}$. 
Notice that the Jacobian of the transformation from the standard Cartesian frame to the new coordinates $\lr{x,y,z}\mapsto\lr{x^1,x^2,x^3}$ is given by 
\begin{equation}
J=\p_1\cdot\p_2\cp\p_3=\frac{1}{\nabla x^3\cdot\nabla x^1\cp\nabla x^2}=\frac{1}{\nabla x^3\cdot \p_3}=1.
\end{equation}
Next, observe that in $U$ the condition of symmetry reads as
\begin{equation}
\mf{L}_{\p_3}\bol{w}=\lr{\p_3\cdot\nabla}w^i \p_i-\lr{\bol{w}\cdot\nabla}\p_3=\frac{\p w^i}{\p x^3}\p_i=\bol{0}.
\end{equation} 
This implies
\begin{equation}
w^i=w^i\lr{x^1,x^2},~~~~i=1,2,3.\label{wi}
\end{equation}
Furthermore, the vector field $\bol{w}$ is solenoidal:
\begin{equation}
\begin{split}
\mf{L}_{\bol{w}}dx \w dy \w dz&=\lr{\di\bol{w}}\,dx\w dy \w dz\\&=\lr{\frac{\p w^1}{\p x^1}+\frac{\p w^2}{\p x^2}+\frac{\p w^3}{\p x^3}}\,dx^1\w dx^2\w dx^3=0.
\end{split}
\end{equation}
Using \eqref{wi}, this gives the condition
\begin{equation}
\frac{\p w^1}{\p x^1}=-\frac{\p w^2}{\p x^2}.
\end{equation}
Hence, if we define the stream function
\begin{equation}
\Theta\lr{x^1,x^2}=-\int_{a}^{x^1}w^2\lr{s^1,x^2}\, ds^1+\int_{b}^{x^2} w^1\lr{a,s^2}\, ds^2,
\end{equation}
with $\lr{a,b,c}$ a point in $U$, it follows that
\begin{equation}
w^1=\frac{\p\Theta}{\p x^2},~~~~w^2=-\frac{\p \Theta}{\p x^1},
\end{equation}
and also
\begin{equation}
\bol{w}=w^i \p_i=\Theta_2 \p_1-\Theta_1 \p_2+w^3\p_3=\nabla\Theta\cp\nabla x^3+w^3\p_3.
\end{equation}
We have thus obtained equation \eqref{wsymloc}.

Now suppose that $\nabla\chi\neq\bol{0}$ in $U$ and choose $\bol{\xi}=\cu\bol{w}$ to be 
the symmetry given by proposition \ref{prop1}. Then, 
\begin{equation}
\cu\bol{w}=\nabla x^1\cp\nabla x^2~~~~{\rm in}~~U.\label{cuw1}
\end{equation}
The first equation of system \eqref{MFF} reads as
\begin{equation}
\begin{split}
\BN{w}&=\lr{\nabla\Theta\cp\nabla x^3+w^3\p_3}\cp\lr{\nabla x^1\cp\nabla x^2}=\\
&=\left[\lr{\nabla\Theta\cp\nabla x^3}\cdot\nabla x^2\right]\nabla x^1-
\left[\nabla\Theta\cp\nabla x^3\cdot\nabla x^1\right]\nabla x^2\\
&=-\nabla\Theta.\label{BNw1}
\end{split}
\end{equation}
Hence, $\chi=-\Theta$. 
Next, we look for a function $\Psi\lr{x^1,x^2}$ satisfying
\begin{equation}
\Psi_1\Theta_2-\Psi_2\Theta_1=1\label{cuw1_2}
\end{equation} 
This equation is a first order linear partial differential equation
for the variable $\Psi$ and can be solved in a neighborhood $V\subseteq U$ by the method of characteristics.
Without loss of generality, we can restrict our domain $U$ so that $U=V$.  
From equations \eqref{cuw1} and \eqref{cuw1_2} we thus obtain
%\begin{equation}
%\nabla\cp\bol{w}=\nabla\Psi\cp\nabla\Theta~~~~{\rm in}~~U.\label{cuw2}
%\end{equation}
%Furthermore, from $\nabla\Theta\cdot\cu\bol{w}=0$,  
%it follows that there exists a function $\Psi\lr{x^1,x^2}$ such that
\begin{equation}
\nabla\cp\bol{w}=\nabla\Psi\cp\nabla\Theta~~~~{\rm in}~~U.\label{cuw2}
\end{equation}
%Combining equations \eqref{cuw1} and \eqref{cuw2}, we have
%\begin{equation}
%\lr{\Psi_1\Theta_2-\Psi_2\Theta_1-1}\nabla x^1\cp\nabla x^2=\bol{0},
%\end{equation}
%which is equation \eqref{PsiTheta}.
Next, decurling equation \eqref{cuw2} gives
\begin{equation}
\bol{w}=\Psi\,\nabla\Theta+\nabla\Phi,\label{wClebsch3}
\end{equation}
for some function $\Phi$. 
On the other hand, $\bol{w}$ is orthogonal to $\nabla\Theta$. 
Therefore, the projection of $\bol{w}$ on the space orthogonal to $\nabla\Theta$ leaves $\bol{w}$ unchanged, i.e.
\begin{equation}
\bol{w}=\frac{\nabla\Theta\cp\lr{\bol{w}\cp\nabla\Theta}}{\abs{\nabla\Theta}^2}=\frac{\nabla\Theta\cp\lr{\nabla\Phi\cp\nabla\Theta}}{\abs{\nabla\Theta}^2}.
\end{equation}
In the last passage equation \eqref{wClebsch3} was used.
Finally, substituting this expression for $\bol{w}$ in the first equation of system \eqref{MFF} gives:
\begin{equation}
\begin{split}
\BN{w}&=\bol{w}\cp\left[\cu\lr{\nabla\Phi-\frac{\nabla\Theta\cdot\nabla\Phi}{\abs{\nabla\Theta}^2}\nabla\Theta}\right]\\
&=-\frac{\nabla\Theta\cp\lr{\nabla\Phi\cp\nabla\Theta}}{\abs{\nabla\Theta}^2}\cp\left[\nabla\lr{\frac{\nabla\Theta\cdot\nabla\Phi}{\abs{\nabla\Theta}^2}}\cp\nabla\Theta\right]\\
&=\left[\frac{\nabla\Theta\cp\lr{\nabla\Phi\cp\nabla\Theta}}{\abs{\nabla\Theta}^2}\cdot
\nabla\lr{\frac{\nabla\Theta\cdot\nabla\Phi}{\abs{\nabla\Theta}^2}}\right]\nabla\Theta.
\end{split}
\end{equation}
Recalling equation \eqref{BNw1}, we obtain equation \eqref{GGSE}.
\end{proof}	

Euquation \eqref{GGSE} corresponds to the
generalized Grad-Shafranov equation derived by Weitzner (equation (28c) of \cite{Weitzner}) expressed in terms of the local coordinates derived from the symmetry. In practice, equation \eqref{GGSE} is applied as follows.
%Suppose that we have a domain $\Omega$ with symmetry $\bol{\xi}$.
%We say that $\Omega$ is symmetric with respect to $\bol{\xi}$ when
%$\bol{\xi}\cdot\bol{n}=0$ on the boundary $\p\Omega$.
In order to construct a solution $\bol{w}$ of system \eqref{MFF} with
a given solenoidal symmetry $\bol{\xi}$, first we determine the associated local coordinate system $\lr{x^1,x^2,x^3}$. Then, we assign $\Theta$ as a given function of $\lr{x^1,x^2}$ and determine the corresponding function $\Psi$ from equation \eqref{PsiTheta}. Finally, we look for solutions $\Phi$ of 
equation \eqref{GGSE}. Notice that equation \eqref{GGSE}, which is a second order nonlinear partial differential equation, is not elliptic (see \cite{Gilbarg} for the definition of second order elliptic partial differential operator) because
the coefficient matrix $a^{ij}=\left[\frac{\nabla\Theta\cp\lr{\nabla\Phi\cp\nabla\Theta}}{\abs{\nabla\Theta}^2}\right]^i\frac{\Theta_j}{\abs{\nabla\Theta}^2}$ is not positive definite (if $\bol{v}\in\mathbb{R}^3$ is a vector perpendicular to $\nabla\Theta\lr{\bol{x}}$ at a point $\bol{x}$, $a^{ij} v_j=0$ there). Hence, the existence of a solution is not guaranteed. If a solution is not found, one can repeat the procedure
by changing the ansatz on the stream function $\Theta$. 
Notice that $\bol{\xi}$ could be chosen to be a (not necessarily Euclidean) solenoidal symmetry of a topological torus $T$, i.e. $\bol{\xi}$ could be a solenoidal vector field tangent to nested flux (toroidal) surfaces $\Theta=$constant. A solution constructed according to the procedure described above
will not exhibit, in general, Euclidean symmetries. 
Furthermore, such solutions will satisfy the boundary condition \eqref{BC}
because $\bol{n}=\nabla\Theta/\abs{\nabla\Theta}$ on $\p T$.

%\section{Euclidean symmetries}

\section{Beltrami fields without continuous Euclidean symmetries}

The purpose of the present section is to show that
system \eqref{MFF} admits nontrivial Beltrami field solutions
without continuous Euclidean symmetries. %At this end, we first
%need to review some facts about Euclidean symmetries. 
Euclidean symmetries are the set of transformations of Euclidean space
that preserve the Euclidean distance between points. 
These transformations comprise translations, rotations, reflections,
and their combination. They define the Euclidean group $E\lr{3}$. Translations 
%(the group operation of the translational group $T\lr{3}$) 
and rotations 
%(the group operation of the special orthogonal group $SO\lr{3}$) 
are continuous transformations, 
called direct isometries, that do not change the orientation of space; together they form the special Euclidean group $SE\lr{3}$. 
Reflections, on the other hand, are indirect isometries that flip
the orientation of space. Furthermore, reflections are discrete transformations 
because the orbit of a point in the metric space under the isometry forms a discrete set.  
As mentioned in the introduction, if one postulates 
the existence of an ignorable coordinate $x^3$ such that the components $w^i$, the pressure field $\chi$,
and the metric tensor $g_{ij}$ are all independent of $x^3$,
it is known that system \eqref{MFF} can be reduced to a single 
nonlinear elliptic second order partial differential equation, the Grad-Shafranov equation \eqref{GSE}. For completeness,
the derivation of the Grad-Shafranov equation is given in appendix A (we also refer the reader to \cite{Eden1,Eden2} on this point).
The requirement of an ignorable coordinate restricts the class of symmetries to direct isometries,
so that the Grad-Shafranov equation does not apply to reflectional symmetries. 
This fact can be seen explicitly by noting that the condition $\p_3g_{ij}=0$ is equivalent
to the symmetry condition $\mf{L}_{\p_3}g_{ij}dx^i\otimes dx^j=0$ for the metric tensor, 
and that $\p_3=\bol{\xi}$ is the direction of symmetry. Then, the following holds:

\begin{proposition}\label{prop3}
The only symmetry preserving the Euclidean metric tensor $g_{ij}=\delta_{ij}$ of $\mathbb{R}^3$ is spanned by the vector field
\begin{equation}
\bol{\xi}_{E}=\bol{a}+\bol{b}\cp\bol{x},
\end{equation} 
with $\bol{a},\bol{b}\in\mathbb{R}^3$ arbitrary constant vectors.
\end{proposition}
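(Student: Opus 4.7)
The plan is to interpret the symmetry condition $\mf{L}_{\bol{\xi}}\lr{g_{ij}\,dx^i\otimes dx^j}=0$ componentwise and recognise it as Killing's equation for flat Euclidean space, then to show that its general solution is affine with antisymmetric linear part. Since $g_{ij}=\delta_{ij}$ has vanishing partial derivatives in Cartesian coordinates, the Lie derivative of the metric collapses to
\begin{equation}
\p_i\xi_j+\p_j\xi_i=0,
\end{equation}
a linear first order system of six equations for the three components of $\bol{\xi}$.

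The central step is to show that every solution of this system on $\mathbb{R}^3$ is a polynomial of degree at most one. I would differentiate Killing's equation and exploit cyclic permutations together with the commutativity of partial derivatives in the chain
\begin{equation}
\p_k\p_i\xi_j=-\p_k\p_j\xi_i=-\p_j\p_k\xi_i=\p_j\p_i\xi_k=\p_i\p_j\xi_k=-\p_i\p_k\xi_j=-\p_k\p_i\xi_j,
\end{equation}
which forces every second derivative of $\xi_j$ to vanish. Consequently $\xi_j=a_j+A_{jk}x^k$ with constants $a_j,A_{jk}$, and reinserting this affine ansatz into Killing's equation leaves only the algebraic constraint $A_{jk}+A_{kj}=0$, so the linear part is antisymmetric.

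In three dimensions the Hodge dual maps antisymmetric matrices bijectively onto vectors via $A_{jk}=-\epsilon_{jkl}b^l$, giving $A_{jk}x^k=\lr{\bol{b}\cp\bol{x}}_j$ for a unique $\bol{b}\in\mathbb{R}^3$. Assembling the pieces yields $\bol{\xi}=\bol{a}+\bol{b}\cp\bol{x}$, and the resulting six-parameter family exhausts the symmetries, matching $\dim\mf{se}\lr{3}=6$. I do not anticipate any serious obstacle, as this is the classical characterisation of Killing fields of flat $\mathbb{R}^3$; the only minor subtlety is that the paper's convention applies the Lie derivative to the full tensor $g_{ij}\,dx^i\otimes dx^j$ rather than to the bare components, but since $g_{ij}$ is constant in standard coordinates this distinction collapses and Killing's equation emerges directly.
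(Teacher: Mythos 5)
Your proposal is correct, and it follows the same overall strategy as the paper: reduce the symmetry condition to Killing's equation $\p_i\xi_j+\p_j\xi_i=0$ in Cartesian coordinates and then show the general solution is affine with antisymmetric linear part. Where you differ is in the middle step. The paper works componentwise: the diagonal equations give $\p\xi^i/\p x^i=0$, so each $\xi^i$ depends only on the other two coordinates; then it shows $\p^2\xi^1/\p y^2=\p^2\xi^1/\p z^2=0$, integrates to obtain bilinear expressions such as $\xi^1=k_1yz+b_1^2z+b_1^3y+a^1$, and finally kills the cross terms $k_i$ and enforces antisymmetry by substituting back into the off-diagonal conditions. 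You instead invoke the standard cyclic chain $\p_k\p_i\xi_j=-\p_k\p_j\xi_i=\cdots=-\p_k\p_i\xi_j$, which forces \emph{all} second derivatives to vanish at once, so the spurious bilinear terms never appear and only the algebraic antisymmetry constraint $A_{jk}+A_{kj}=0$ remains. Your route is shorter and is the textbook characterisation of flat Killing fields; the paper's route is more pedestrian but makes the elimination of each coefficient explicit. Both implicitly require $\bol{\xi}\in C^2$ so that partial derivatives commute, and your identification $A_{jk}x^k=\lr{\bol{b}\cp\bol{x}}_j$ via the Hodge dual is the same final step the paper performs in its remark on the Lie algebra $\mf{se}\lr{3}$.
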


\begin{proof}
First, consider a vector field $\bol{\xi}\in T\Omega$ and evaluate the Lie-derivative of the metric tensor
\begin{equation}
\begin{split}
\mf{L}_{\bol{\xi}}g_{ij}dx^i \otimes dx^j&=\sum_{i}\left[\mf{L}_{\bol{\xi}}dx^{i}\otimes dx^i+dx^i \otimes \mf{L}_{\bol{\xi}}d x^i\right]\\
&=\sum_{i}\left[\lr{di_{\bol{\xi}}dx^i}\otimes dx^i+dx^i\otimes \lr{di_{\bol{\xi}}dx^i}\right]\\
&=\sum_{i}\left[d\xi^i \otimes dx^i +dx^i\otimes d\xi^i\right]\\
&=\sum_{ij}\lr{\frac{\p\xi^i}{\p x^j}+\frac{\p \xi^j}{\p x^i}}dx^i\otimes dx^j.
\end{split}
\end{equation}
Suppose that the Lie derivative vanishes. 
Setting $i=j$ gives $\frac{\p \xi^i}{\p x^i}=0$, $i=1,2,3$. Hence, $\xi^1=\xi^1\lr{y,z}$, $\xi^2=\xi^2\lr{z,x}$, and $\xi^3=\xi^3\lr{x,y}$.
For $i\neq j$, one obtains the conditions
\begin{equation}
\frac{\p\xi^1}{\p y}=-\frac{\p\xi^2}{\p x},~~~~\frac{\p\xi^1}{\p z}=-\frac{\p\xi^3}{\p x},~~~~\frac{\p\xi^2}{\p z}=-\frac{\p\xi^3}{\p y}.\label{xicond}
\end{equation}
Next, observe that
\begin{equation}
\frac{\p^2\xi^1}{\p y^2}=-\frac{\p^2\xi^2}{\p x\p y}=0,~~~~\frac{\p^2\xi^1}{\p z^2}=-\frac{\p^2\xi^3}{\p z\p x}=0.
\end{equation}
Integrating each equation gives
\begin{equation}
\xi^1=u_{1}\lr{z}y+v_{1}\lr{z},~~~~\xi^1=f_{1}\lr{y}z+g_{1}\lr{y}.\label{xi1__}
\end{equation}
Here, $u_{1}$ and $v_{1}$ are functions of $z$, and $f_{1}$ and $g_{1}$ functions of $y$ to be determined.
Deriving these equations with respect to $y$ leads to
\begin{equation}
\frac{\p \xi^1}{\p y}=u_{1}=\frac{\p f_{1}}{\p y}z+\frac{\p g_{1}}{\p y}.
\end{equation}
It follows that
\begin{equation}
f_{1}=k_{1}y+b_{1}^2,~~~~g_{1}=b_{1}^3y+a^{1},~~~~u_{1}=k_{1}z+b_{1}^3,~~~~k_{1},b_{1}^2,b_{1}^3,a^{1}\in\mathbb{R}.
\end{equation}
Substituting these expressions in \eqref{xi1__} we obtain
\begin{equation}
\xi^1=k_{1}yz+b_1^2　z+b_{1}^3 y+a^1.
\end{equation}
In a similar manner, one can show that
\begin{subequations}
\begin{align}
\xi^2=k_{2}zx+b_2^1 z+b_2^3 x +a^2,~~~~k_{2},b_2^1,b_2^3,a^2\in\mathbb{R},\\
\xi^3=k_{3}xy+b_3^1 y+b_3^2 x +a^3,~~~~k_{3},b_3^1,b_3^2,a^3\in\mathbb{R}.
\end{align}
\end{subequations}
Plugging these expressions into the conditions \eqref{xicond}, we conclude that
\begin{equation}
k_1=k_2=k_3=0,~~~~b_{1}^3=-b_{2}^3=-b^3,~~~~b_1^2=-b_3^2=b^2,~~~~b_2^1=-b_3^1=-b^1.
\end{equation}
\end{proof}

\begin{remark}
Notice that $\bol{a}$ represents translations, while $\bol{b}\cp\bol{x}$ rotations.
Indeed, the cross product $\bol{b}\cp\bol{x}$ can always be uniquely represented by the action of an antisymmetric matrix $B=-B^T$ as $B\bol{x}$. Then, 
\begin{equation}
\begin{bmatrix}\bol{\xi}_{E}\\0\end{bmatrix}=A v,~~~~A=\begin{bmatrix} B&\bol{a}\\\bol{0}^T&0\end{bmatrix},~~~~v=\begin{bmatrix}\bol{x}\\1\end{bmatrix}.
\end{equation}
The $4\cp 4$ matrix $A$ spans the Lie algebra $\mf{se}\lr{3}$ of the special Euclidean group $SE\lr{3}$, i.e. the tangent space of $SE\lr{3}$ at the identity.
The vector field $\bol{\xi}_{E}=\bol{a}+\bol{b}\cp\bol{x}$ is called a Killing vector field of the Euclidean metric. 
\end{remark}

From proposition \ref{prop3}, it is now clear that a solution $\bol{w}$ of system \eqref{MFF} will not possess
continuous Euclidean symmetries provided that
\begin{equation}
\mf{L}_{\bol{\xi_E}}\bol{w}\neq\bol{0}~~~~\forall \bol{a},\bol{b}\in\mathbb{R}^3:\bol{\xi}_E=\bol{a}+\bol{b}\cp\bol{x}\neq\bol{0}.\label{SE3asym}
\end{equation}
If one can find a solution $\bol{w}$ satisfying \eqref{SE3asym}, 
then such solution cannot be obtained as a solution of the Grad-Shafranov equation \eqref{GSE}. 
%However, equation \eqref{SE3asym} does not preclude the possibility that
%$\bol{w}$ is symmetric with respect to an Euclidean isometry involving reflections.
%Therefore, to exclude all Euclidean symmetries, a more general condition is necessary.
%Given a point $\bol{x}\in\mathbb{R}^3$, the most general Euclidean isometry
%can be represented as the transformation mapping $\bol{x}$ to the point $\bol{x}'=Q\bol{x}+\bol{a}$,
%where $Q$ is a real valued orthogonal matrix such that $QQ^T=I$, with $I$ the identity matrix, and $\bol{a}\in\mathbb{R}^3$ a constant vector. Pure rotations correspond to the case $\det\lr{Q}=1$, 
%pure translation to the case $Q=I$, while transformation involving reflections
%correspond to the case $\det\lr{Q}=-1$. 
%Then, a solution $\bol{w}$ will not exhibit any Euclidean symmetry whenever
%\begin{equation}
%\bol{w}\lr{Q\bol{x}+\bol{a}}\neq\bol{w}\lr{\bol{x}}~~~~\forall\bol{a}\in\mathbb{R}^3, Q\in O\lr{3}: Q\bol{x}+\bol{a}\neq\bol{x}.\label{E3asym}
%\end{equation} 
%Here, $O\lr{3}$ denotes the orthogonal group in three dimensions. 

We are ready to construct nontrivial Beltrami field solutions of system \eqref{MFF}
without continuous Euclidean symmetries. To achieve this goal, we apply the Clebsch-like parametrization 
of Beltrami fields derived in \cite{SatoB1}. 
In particular, our aim is to find a coordinate system $\lr{x^1,x^2,x^3}$
satisfying system \eqref{MT}. Then, we know that a vector field in the form \eqref{LBF}
is a nontrivial Beltrami field. %with proportionality coefficient $\hat{h}=$ (see \cite{SatoB1}).
The desired coordinate system can be characterized as follows:

\begin{mydef}
Let $\lr{x^1,x^2,x^3}\in C^{\infty}\lr{\Omega}$ denote a curvilinear coordinate system
in a domain $\Omega\subset\mathbb{R}^3$. The coordinate system is admissible in $\Omega$ if it satisfies
system \eqref{MT} in $\Omega$. 
We denote by $\mc{A}_{\Omega}$ the set of admissible curvilinear coordinate systems in $\Omega$,
and by $\mc{A}^{\perp}_{\Omega}$ the set of orthogonal admissible coordinate systems in $\Omega$.
Evidently, $\mc{A}^{\perp}_{\Omega}\subset\mc{A}_{\Omega}$.
\end{mydef}

\noindent Similarly, the notion of admissible proportionality coefficient is useful:

\begin{mydef}
Let ${f}\in C^{\infty}\lr{\Omega}$ be a smooth function in a domain $\Omega\subset\mathbb{R}^3$. 
${f}$ is an admissible proportionality coefficient in $\Omega$ if there exists a smooth nontrivial Beltrami field solution $\bol{w}\in C^{\infty}\lr{\Omega}$ of equation \eqref{MFF} with proportionality coefficient $\hat{h}=f$. The set of admissible
proportionality coefficients $\hat{h}\in C^{\infty}\lr{\Omega}$, $\hat{h}\neq 0$ in $\Omega$ is denoted by $\mf{A}_{\Omega}$.
\end{mydef}

\noindent The following propositions narrows down the class of admissible proportionality coefficients $\mf{A}_{\Omega}$
that will be allowed in the construction of the sought Beltrami field solutions once an admissible coordinate system $\lr{x^1,x^2,x^3}\in\mc{A}_{\Omega}$ has been found.

\begin{proposition}\label{prop4}
Let $\Omega\subset\mathbb{R}^3$ denote a bounded domain. 
%Let $\lr{x^1,x^2,x^3}\in C^{\infty}\lr{\Omega}$ be a curvilinear coordinate system in a domain $\Omega\subset\mathbb{R}^3$. Then, 
The set $\mf{A}_\Omega$ of admissible proportionality coefficients $\hat{h}\in C^{\infty}\lr{\Omega}$, $\hat{h}\neq 0$, such that equation \eqref{MFF} has a smooth nontrivial Beltrami field solution in $\Omega$ satisfies
\begin{subequations}
\begin{align}
\mf{A}&\supset\left\{\hat{h}\in C^{\infty}\lr{\Omega}~\rvert~ \hat{h}=f\lr{x^3}\sqrt{g^{33}},~\lr{x^1,x^2,x^3}\in \mc{A}_{\Omega}^{\perp}\right\},\label{A_1}\\
\mf{A}&\subset\left\{\hat{h}\in C^{\infty}\lr{\Omega}~\rvert~ \hat{h}=\hat{h}\lr{x^3,L_{x^3}},~\lr{x^1,x^2,x^3}\in \mc{A}_{U},~\forall\bol{x}\in\Omega~ \exists U\subset\Omega\right\}.\label{A_2}
\end{align}
\end{subequations}
Here, $f\in C^{\infty}\lr{\ov{\Omega}}$, is an arbitrary smooth function of the coordinate $x^3$ such that $f\neq 0$, and $L_{x^3}=x^1\cos x^3-x^2\sin x^3$.
\end{proposition}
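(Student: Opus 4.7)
The plan splits the two inclusions of the proposition.

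For the upper-bound inclusion \eqref{A_2}, my strategy is to exploit the local Clebsch representation \eqref{LBF}. Around any $\bol{x}\in\Omega$, invoking \eqref{LBF} on a neighborhood $U$ with $\lr{x^1,x^2,x^3}\in\mc{A}_U$, a direct computation gives $\cu\bol{w}=\nabla x^3\cp\nabla L_{x^3}$, so that the Beltrami condition $\cu\bol{w}=\hat{h}\bol{w}$ yields $\bol{w}=\hat{h}^{-1}\nabla x^3\cp\nabla L_{x^3}$. Combining $\di\bol{w}=0$ with $\di\lr{\cu\bol{w}}=0$ then produces the standard consequence $\bol{w}\cdot\nabla\hat{h}=0$, and since $\bol{w}$ is proportional to $\nabla x^3\cp\nabla L_{x^3}$, this forces $\nabla\hat{h}$ to lie in $\mathrm{span}\lr{\nabla x^3,\nabla L_{x^3}}$. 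Hence $\hat{h}$ depends only on $x^3$ and $L_{x^3}$ locally, which is exactly \eqref{A_2}.

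For the lower-bound inclusion \eqref{A_1}, the plan is constructive. Given $\lr{x^1,x^2,x^3}\in\mc{A}_\Omega^\perp$ and a smooth nonvanishing function $f\lr{x^3}$, I need to exhibit a nontrivial Beltrami field on $\Omega$ with proportionality coefficient $f\lr{x^3}\sqrt{g^{33}}$. The first step is to verify that the canonical field $\bol{w}_0=\cos x^3\,\nabla x^2+\sin x^3\,\nabla x^1$ satisfies $\cu\bol{w}_0=\sqrt{g^{33}}\,\bol{w}_0$. Indeed, using orthogonality $g^{12}=g^{13}=g^{23}=0$ together with \eqref{MT1}, which collapses to $g^{11}=g^{22}$ wherever $\sin 2x^3\neq 0$, the identities $\nabla x^3\cp\nabla x^1=\sqrt{g^{33}}\,\nabla x^2$ and $\nabla x^3\cp\nabla x^2=-\sqrt{g^{33}}\,\nabla x^1$ give the desired conclusion, while \eqref{MT3} ensures $\di\bol{w}_0=0$. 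To introduce the prescribed factor $f\lr{x^3}$, my plan is to construct a new orthogonal admissible coordinate system $\lr{y^1,y^2,y^3}\in\mc{A}_\Omega^\perp$ in which $y^3=F\lr{x^3}$ with $F'\lr{x^3}=f\lr{x^3}$, while simultaneously reshaping $y^1,y^2$ on each level surface $x^3=\const$ so as to restore the admissibility conditions \eqref{MT}. Since then $\sqrt{g_{(y)}^{33}}=\abs{F'\lr{x^3}}\sqrt{g^{33}}=\abs{f\lr{x^3}}\sqrt{g^{33}}$, and since the overall sign of the proportionality coefficient can be absorbed into the orientation of the Beltrami field, the canonical field associated with $\lr{y^1,y^2,y^3}$ has the required proportionality coefficient.

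The principal obstacle is realizing the lower bound. A bare reparameterization $y^3=F\lr{x^3}$ with $y^i=x^i$ for $i=1,2$ preserves orthogonality but destroys \eqref{MT3}, because $\cos F\lr{x^3}\,\Delta x^2+\sin F\lr{x^3}\,\Delta x^1$ does not vanish unless $F$ differs from $x^3$ by an integer multiple of $\pi$. The companion coordinates $y^1,y^2$ therefore have to be chosen nontrivially in order to restore \eqref{MT1}--\eqref{MT3} in the new labeling, and this reduces to solving a coupled first order PDE system on each level surface of $x^3$. I expect to handle this step by the method of characteristics, in direct analogy with the PDE analysis carried out in the proof of Proposition~\ref{prop1}.
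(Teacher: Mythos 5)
Your argument for the upper bound \eqref{A_2} is essentially the paper's: both pass through the local representation \eqref{LBF}, write $\bol{w}=\hat{h}^{-1}\nabla x^3\cp\nabla L_{x^3}$, and use $\di\bol{w}=0$ to conclude $\nabla\hat{h}\cdot\nabla x^3\cp\nabla L_{x^3}=0$, hence $\hat{h}=\hat{h}\lr{x^3,L_{x^3}}$ locally. No issues there.

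The lower bound \eqref{A_1} is where the proposal has a genuine gap, and it is the one you flag yourself: the whole weight of your argument rests on producing a \emph{new} orthogonal admissible system $\lr{y^1,y^2,y^3}\in\mc{A}^{\perp}_{\Omega}$ with $y^3=F\lr{x^3}$, and this construction is never carried out. It is not a routine step: membership in $\mc{A}^{\perp}_{\Omega}$ requires \eqref{MT} to hold on all of $\Omega$, whereas the method of characteristics (your proposed tool) only yields local solutions, so at best you would recover a local statement, not the global inclusion \eqref{A_1}; constructing admissible coordinates globally is essentially the central difficulty of the whole subject. Even granting the construction, you would still need to check that $\sqrt{g^{33}_{(y)}}$ equals $f\lr{x^3}\sqrt{g^{33}}$ with $g^{33}$ computed in the \emph{original} system, sign included. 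The paper sidesteps all of this: no change of coordinates is needed. Setting $\sigma\lr{x^3}=\int f\,dx^3$ and keeping the same $\lr{x^1,x^2,x^3}\in\mc{A}^{\perp}_{\Omega}$, the field
\begin{equation*}
\bol{w}=\cos\left[\sigma\lr{x^3}\right]\nabla x^2+\sin\left[\sigma\lr{x^3}\right]\nabla x^1
\end{equation*}
has $\cu\bol{w}=f\lr{\sin\sigma\,\nabla x^2\cp\nabla x^3+\cos\sigma\,\nabla x^3\cp\nabla x^1}=f\sqrt{g^{33}}\,\bol{w}$, by exactly the two cross-product identities you already verified for $\bol{w}_0$ (orthogonality and $g^{11}=g^{22}$ are untouched by composing the angle with $\sigma$): the factor $f$ is produced by the chain rule acting on the trigonometric functions, not by deforming the chart. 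One remark in your favor: your observation that the substitution disturbs the solenoidality condition \eqref{MT3} (since $\cos\sigma\,\Delta x^2+\sin\sigma\,\Delta x^1$ need not vanish when $\cos x^3\,\Delta x^2+\sin x^3\,\Delta x^1$ does) is correct and applies equally to the paper's substitution; it is harmless precisely in the harmonic-conjugate realization where $\Delta x^1=\Delta x^2=0$, which is the case the paper actually constructs. But as written, your proof of \eqref{A_1} is incomplete until the coordinate-reshaping step is supplied, whereas the paper's substitution reduces it to a one-line computation.
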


\begin{proof}
First, we prove \eqref{A_1}. Define
\begin{equation}
\sigma\lr{x^3}=\int f\lr{x^3}\,dx^3.
\end{equation}
This integral is well defined since the function $f$ is, by hypothesis, smooth in the compact domain $\ov{\Omega}$. 
Then, given an admissible orthogonal coordinate system $\lr{x^1,x^2,x^3}\in \mc{A}_{\Omega}^{\perp}$, it is not difficult to verify that the vector field
\begin{equation}
\bol{w}=\cos\left[{\sigma\lr{x^3}}\right]\,\nabla x^2+\sin\left[{\sigma\lr{x^3}}\right]\,\nabla x^1,\label{wA1}
\end{equation}
is a Beltrami field in $\Omega$ with proportionality coefficient
\begin{equation}
\hat{h}=\frac{f\lr{x^3}}{g^{11}\sqrt{\abs{g}}}=f\lr{x^3}\sqrt{g^{33}}.\label{h_}
\end{equation}
Here, $\abs{g}=\lr{\p_1\cdot\p_2\cp \p_3}^2$ denotes the determinant of the covariant metric tensor $g_{ij}$. 
Indeed, taking the curl of \eqref{wA1} gives
\begin{equation}
\cu\bol{w}=f\lr{\sin \sigma \,\nabla x^2\cp\nabla x^3+\cos\sigma\,\nabla x^3\cp\nabla x^1}.
\end{equation}
Noting that $\nabla x^1\cdot\nabla x^2\cp\nabla x^3=1/\sqrt{\abs{g}}$, and $g^{11}=g^{22}$ from system \eqref{MT} and orthogonality of the coordinates, we further have
\begin{equation}
\begin{split}
\cu\bol{w}&=\frac{f}{\sqrt{\abs{g}}}\lr{\sin \sigma \,\p_{1}+\cos\sigma\,\p_{2}}\\&=\frac{f}{g^{11}\sqrt{\abs{g}}}\lr{\sin\sigma\,\nabla x^1+\cos\sigma\,\nabla x^2}=\frac{f}{g^{11}\sqrt{\abs{g}}}\,\bol{w}.
\end{split}
\end{equation}
Observe that $1/\sqrt{\abs{g}}=g^{11}\sqrt{g^{33}}$, which gives equation \eqref{h_}.
Next, we prove \eqref{A_2}. We must show that, given a solenoidal Beltrami field with proportionality coefficient $\hat{h}\neq 0$ in $\Omega$, there exists
an admissible coordinate system $\lr{x^1,x^2,x^3}\in \mf{C}_{\Omega}$ such that $\hat{h}=\hat{h}\lr{x^3,L_{x^3}}$.
This directly follows from theorem 1 of \cite{SatoB1}. Indeed, if $\bol{w}$ is a smooth nontrivial Beltrami field in $\Omega$, 
$\forall\bol{x}\in\Omega$ there exist a neighborhood $U\subset\Omega$ of $\bol{x}$ and an admissible coordinate system $\lr{x^1,x^2,x^3}\in \mc{A}_{U}$
such that
\begin{equation}
\bol{w}=\cos x^3\,\nabla x^2+\sin x^3\, \nabla x^1~~~~{\rm in}~~U.
\end{equation}
Since $\bol{w}$ is a Beltrami field, this implies
\begin{equation}
\bol{w}=\hat{h}^{-1}\nabla\cp\bol{w}=\hat{h}^{-1}\nabla x^3\cp\nabla L_{x^3}.
\end{equation}
Recalling that $\di\bol{w}=0$, we have $\nabla\hat{h}\cdot\nabla x^3\cp\nabla L_{x^3}=0$. Hence, $\hat{h}=\hat{h}\lr{x^3,L_{x^3}}$.
\end{proof}

We expect that, if the coordinate system $\lr{x^1,x^2,x^3}$ is built in such a way that it does not possess Euclidean symmetries,
the same should hold for the corresponding Beltrami field \eqref{LBF}.
System \eqref{MT} greatly simplifies if the coordinate system $\lr{x^1,x^2,x^3}$ is assumed orthogonal.
In such case, $g^{ij}=0$ for all $i\neq j$. Then, \eqref{MT} reduces to
\begin{subequations}\label{MTR}
\begin{align}
g^{11}&=g^{22},\\
\Delta x^2\cos x^3&=\Delta x^1 \sin x^3.
\end{align}
\end{subequations} 
These conditions suggest that an intimate relationship exists between Beltrami fields
and two-dimensional harmonic conjugate functions (this fact is discussed in \cite{SatoB2}). 
Indeed, system \eqref{MTR} can be satisfied by setting $x^3=\sigma\lr{z}$, 
$x^1=u\lr{x,y}$, and $x^2=v\lr{x,y}$, with $\sigma$ a function of the
Cartesian coordinate $z$, and $u$ and $v$ harmonic conjugate functions in the $x$-$y$ plane such that
\begin{equation}
u_x=v_y,~~~~u_y=-v_x.
\end{equation}
Thus, the idea is to look for a Beltrami field solution without Euclidean symmetries in the form
\begin{equation}
\bol{w}=\cos\left[\sigma\lr{z}\right]\,\nabla v\lr{x,y}+\sin\left[\sigma\lr{z}\right]\,\nabla u\lr{x,y}.
\end{equation}
On this regard, we have the following:

\begin{proposition}\label{prop5}
There exist nontrivial Beltrami field solutions of system \eqref{MFF} without continuous Euclidean symmetries, 
i.e. smooth solenoidal vector fields $\bol{w}\in C^{\infty}\lr{\Omega}$ with $\JI{w}\neq 0$ solving \eqref{BF} 
in some bounded domain $\Omega\subset\mathbb{R}^3$ and satisfying the condition \eqref{SE3asym} in $\Omega$.
\end{proposition}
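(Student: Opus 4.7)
The plan is to exhibit an explicit Beltrami field of the form dictated by Proposition \ref{prop4} and verify by direct computation that its Lie derivative along every Killing vector $\bol{\xi}_E=\bol{a}+\bol{b}\cp\bol{x}$ is nonzero. First I would choose the harmonic conjugate pair $u(x,y)=e^{x}\sin y$, $v(x,y)=-e^{x}\cos y$ and the nonlinear profile $\sigma(z)=z^{2}$, so that $(x^1,x^2,x^3)=(u,v,\sigma)$ forms an orthogonal admissible coordinate system on the bounded domain $\Omega=(0,1)^{3}$ (the Cauchy--Riemann relations give $\abs{\nabla u}=\abs{\nabla v}$, and $\Delta u=\Delta v=0$ satisfy \eqref{MTR}). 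By Proposition \ref{prop4}, the vector field
\begin{equation}
\bol{w}=\cos(z^{2})\,\nabla v+\sin(z^{2})\,\nabla u = -e^{x}\cos\lr{y+z^2}\,\hb{x}+e^{x}\sin\lr{y+z^2}\,\hb{y}
\end{equation}
is a smooth nontrivial Beltrami field in $\Omega$, with proportionality factor $\hat{h}(z)=\sigma'(z)=2z\neq 0$ in $\Omega$, as one confirms by a direct curl computation yielding $\cu\bol{w}=2z\,\bol{w}$. Since $\JI{w}=2z\,e^{2x}\neq 0$ in $\Omega$, the field is nontrivial in the required sense.

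Next, I would verify condition \eqref{SE3asym}. Using $\mf{L}_{\bol{\xi}_E}\bol{w}=(\bol{\xi}_E\cdot\gr)\bol{w}-(\bol{w}\cdot\gr)\bol{\xi}_E$ together with $(\bol{w}\cdot\gr)\bol{\xi}_E=\bol{b}\cp\bol{w}$ (because $\bol{\xi}_E$ is affine in $\bol{x}$), I would write out each Cartesian component of $\mf{L}_{\bol{\xi}_E}\bol{w}=\bol{0}$. The $z$-component reads $b^{x}w^{y}-b^{y}w^{x}=0$, which after substitution gives
\begin{equation}
b^{x}\sin\lr{y+z^{2}}+b^{y}\cos\lr{y+z^{2}}=0\quad\forall\,(x,y,z)\in\Omega,
\end{equation}
forcing $b^{x}=b^{y}=0$ by linear independence of $\sin$ and $\cos$. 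Setting $b=b^{z}$ and using the remaining components, one obtains, after dividing out $e^{x}$ and introducing $\phi=y+z^{2}$,
\begin{equation}
-\bigl(a^{x}+bz^{2}-b\phi\bigr)\cos\phi+\bigl(a^{y}+bx+2za^{z}+b\bigr)\sin\phi=0
\end{equation}
identically in $(x,z,\phi)$. The presence of the term $b\phi\cos\phi$ (which is not in the span of $\cos\phi$ and $\sin\phi$) forces $b=0$; independence of $\sin\phi,\cos\phi$ and of $1,x,z$ then forces $a^{x}=a^{y}=a^{z}=0$.

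Hence $\bol{\xi}_{E}=\bol{0}$ is the unique Killing vector annihilating $\bol{w}$, so \eqref{SE3asym} is satisfied and the proposition is proved. The main potential obstacle is precisely this final verification: one must rule out all six parameters of the Lie algebra $\mf{se}(3)$ simultaneously, and any insufficiently generic choice of $(u,v,\sigma)$ will retain a screw symmetry along some axis (for instance, the simpler choice $u+iv=\tfrac{1}{2}(x+iy)^{2}$ with $\sigma(z)=z$ does preserve a one-parameter screw motion $a^{z}=-2b$, $b^{x}=b^{y}=0$). The nonlinearity of $\sigma$ together with the exponential growth built into $(u,v)$ is what prevents such residual symmetries, ensuring the field is genuinely asymmetric under every continuous Euclidean transformation.
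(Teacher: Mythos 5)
Your proposal is correct and follows essentially the same route as the paper: the field you construct, $\bol{w}=-\cos\lr{z^2}\,\nabla\lr{e^x\cos y}+\sin\lr{z^2}\,\nabla\lr{e^x\sin y}$, is literally the paper's second example in the proof of proposition \ref{prop5}, and your verification of \eqref{SE3asym} is the same component-by-component computation, organized via linear independence of $\cos\phi$, $\phi\cos\phi$, $\sin\phi$ on an interval rather than by restricting to the surface $y=-z^2$ as the paper does. Your restriction to $\Omega=(0,1)^3$ is a sensible refinement, since $\hat{h}=2z$ vanishes at $z=0$ and nontriviality ($\JI{w}\neq 0$) requires keeping $z$ away from zero, a point the paper's text glosses over for this example.
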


\begin{proof}
It is sufficient to provide an example of nontrivial Beltrami field without continuous Euclidean symmetries.
We claim that the Beltrami field
\begin{equation}
\bol{w}=-\cos\lr{e^z}\,\nabla\lr{e^x\cos y}+\sin\lr{e^z}\nabla\lr{e^x\sin y},\label{Basym}
\end{equation}
corresponding to the choice $\lr{x^1,x^2,x^3}=\lr{u,v,\sigma}=\lr{e^x\sin y, -e^x \cos y,e^z}$, does not possess continuous Euclidean symmetries. Furthermore, it is smooth and nontrivial in any bounded domain $\Omega\subset\mathbb{R}^3$. 

Let us verify that \eqref{Basym} is not symmetric under a combination of translations and rotations by
evaluating \eqref{SE3asym}. After some manipulations, equation \eqref{Basym} can be rearranged as
\begin{equation}
\bol{w}=e^x \sin\lr{y+e^z}\,\nabla y-e^x \cos\lr{y+e^z}\,\nabla x.
\end{equation}
Next, observe that
\begin{equation}
\begin{split}
\mf{L}_{\lr{\bol{a}+\bol{b}\cp\bol{x}}}\bol{w}&=\left[\lr{\bol{a}+\bol{b}\cp\bol{x}}\cdot\nabla\right]\bol{w}-\lr{\bol{w}\cdot\nabla}\lr{\bol{b}\cp\bol{x}}
\\&=\left[\lr{\bol{a}+\bol{b}\cp\bol{x}}\cdot\nabla\right]\bol{w}-\bol{b}\cp\bol{w}.\label{LE}
\end{split}
\end{equation}
Let $\lr{w_x,w_y,w_z}$, $\lr{a_x,a_y,a_z}$, and $\lr{b_x,b_y,b_z}$ denote the Cartesian components of the vector fields $\bol{w}$, $\bol{a}$, and $\bol{b}$. Projecting equation \eqref{LE} along $\nabla x$ and substituting \eqref{Basym}, we obtain
\begin{equation}
\begin{split}
\nabla x\cdot\mf{L}_{\lr{\bol{a}+\bol{b}\cp\bol{x}}}\bol{w}=&-\left[\lr{\bol{a}+\bol{b}\cp\bol{x}}\cdot\nabla\right]\left[e^x\cos\lr{y+e^z}\right]+b_z e^x\sin\lr{y+e^z}\\
=&e^x\sin\lr{y+e^z}\left[a_y+b_z \lr{1+x}-b_x z+e^z\lr{a_z+b_x y-b_y x}\right]\\
&-e^x\cos\lr{y+e^z}\lr{a_x+b_y z-b_z y}.\label{LBasym}
\end{split}
\end{equation}
The vector field \eqref{Basym} possesses a continuous Euclidean symmetry if there exist a vector field $\bol{\xi}_{E}=\bol{a}+\bol{b}\cp\bol{x}\neq\bol{0}$ such that the right-hand side of the equation above vanishes in some domain $\Omega\subset\mathbb{R}^3$. 
Choose $\Omega$ so that it contains a finite area of the surface $S=\left\{\bol{x}\in\mathbb{R}^3:y=-e^{z}\right\}$. On the surface, the condition that
the right-hand side of \eqref{LBasym} vanishes reduces to
\begin{equation}
a_x+b_y z+b_z e^z=0~~~~{\rm on}~~S.
\end{equation}
This immediately leads to $a_x=b_y=b_z=0$. 
Then, the right-hand side of \eqref{LBasym} vanishes if
\begin{equation}
a_y+b_x \lr{e^z y-z}+a_z e^z=0~~~~{\rm in}~~\Omega.
\end{equation}
%Next, adjust $\Omega$ so that it also contains a finite area of the surface $S_2=\left\{\bol{x}\in\mathbb{R}^3:y=-e^{z}+\pi/2\right\}$.
%On the surface, the condition that
%the right-hand side of \eqref{LBasym} vanishes reduces to
%\begin{equation}
%a_y+a_z e^z+b_x \lr{\frac{\pi}{2}e^z-e^{2z}-z}=0~~~~{\rm on}~~S_2.
%\end{equation}
This condition can be satisfied only when $a_y=a_z=b_x=0$. Thus, $\bol{a}+\bol{b}\cp\bol{x}=\bol{0}$ and no
continuous Euclidean symmetry exists.
Finally, it is clear that the vector field \eqref{Basym} is smooth in any bounded domain $\Omega\subset\mathbb{R}^3$,  
and nontrivial in $\Omega$ since $\hat{h}=e^z$ and $w^2=e^{2x}$ imply $\JI{w}=\hat{h} w^2=e^{2x+z}\neq 0$ in $\Omega$.
%To exclude isometries that involve reflections, we now must verify equation \eqref{E3asym}.  
%Observe that we could have checked \eqref{E3asym} directly. However, 
%one is usually interested in direct isometries that enable the reduction of system \eqref{MFF}
%to the Grad-Shafranov equation. In this case, the relevant condition is given by \eqref{SE3asym}.
%Let $x^j$, $j=1,2,3,$, denote the Cartesian coordinates $\lr{x,y,z}$ and $Q^{i}_{\,j}$, $i,j=1,2,3$, 
%the Cartesian components of the candidate orthogonal matrix $Q$. 
%Consider the projection of equation \eqref{E3asym} along $\nabla x$. We have:
%\begin{equation}
%w_x\lr{Q\bol{x}+\bol{a}}=w_{x}\lr{\bol{x}}.
%\end{equation}
%Recalling \eqref{Basym}, this is equivalent to demanding that
%\begin{equation}
%\exp\left\{Q^{1}_{\,j}x^j+a_x\right\}\cos\lr{Q^{2}_{\,j}x^j+a_y+\exp\left\{Q^{3}_{\,j}x^j+a_z\right\}}=e^x\cos\lr{y+e^z}.
%\end{equation}

Other solutions can be constructed by appropriate choice of the coordinates $\lr{x^1,x^2,x^3}$.
In particular, changing the definition of the third coordinate $x^3$ is often sufficient to produce
solutions without trivial symmetries. For example, replacing $x^3=e^z$ with $x^3=z^2$ gives the Beltrami field without continuous Euclidean symmetries 
\begin{equation}
\bol{w}=-\cos\lr{z^2}\,\nabla\lr{e^x\cos y}+\sin\lr{z^2}\nabla\lr{e^x\sin y}.\label{wLE}
\end{equation}
Indeed, projecting equation \eqref{LE} along $\nabla x$ and substituting \eqref{wLE}, we obtain
\begin{equation}\label{LE2}
\begin{split}
\nabla x&\cdot\mf{L}_{\lr{\bol{a}+\bol{b}\cp\bol{x}}}\bol{w}\\
=&\left[\lr{\bol{a}+\bol{b}\cp\bol{x}}\cdot\nabla\right]\left[e^x\lr{\sin\lr{z^2}\sin y-\cos\lr{z^2}\cos y}\right]\\&+b_z e^x\lr{\cos\lr{z^2}\sin y+\sin\lr{z^2}\cos y}\\
=&-\left[\lr{\bol{a}+\bol{b}\cp\bol{x}}\cdot\nabla\right]\left[{e^x\cos\lr{y+z^2}}\right]+b_z e^x\sin\lr{y+z^2}\\
=&e^x\sin\lr{y+z^2}\left[a_y+2z a_z+\lr{2y-1}z b_x-2zx b_y+\lr{1+x}b_z\right]\\
&-e^x\cos\lr{y+z^2}\lr{a_x+zb_y-yb_z}.
\end{split}
\end{equation} 
The vector field \eqref{wLE} possesses a continuous Euclidean symmetry if there exist a vector field $\bol{\xi}_{E}=\bol{a}+\bol{b}\cp\bol{x}\neq\bol{0}$ such that 
the right-hand side of the equation above vanishes in some domain $\Omega\subset\mathbb{R}^3$. 
However, considering the surface $y=-z^2$, it is clear that for the right-hand side of \eqref{LE2} to be zero we must have $a_x=b_y=b_z=0$.
Then, the remaining sine term leads to $a_y=a_z=b_x=0$. Hence, $\bol{a}=\bol{b}=\bol{\xi}_{E}=\bol{0}$.
\end{proof}

\section{Magnetofluidostatic fields with non-vanishing pressure gradients and without Euclidean symmetries}

The purpose of the present section is to show the existence of smooth solutions of system \eqref{MFF}
without continuous Euclidean symmetries and with finite pressure gradients in bounded domains. 

As in the case of Beltrami-type solutions discussed in section 3, the idea is to find a suitable local parametrization
of solutions without trivial symmetries and with non-vanishing pressure gradients. 
Given a domain $\Omega$, a smooth vector field $\bol{w}\in C^{\infty}\lr{\Omega}$ with $\cu\bol{w}\neq\bol{0}$ can be locally represented through the Clebsch parametrization
\begin{equation}
\bol{w}=\nabla\Phi+\Psi\,\nabla\Theta~~~~{\rm in}~~U,\label{w_4}
\end{equation}
with $\lr{\Phi,\Psi,\Theta}\in C^{\infty}\lr{U}$, and $U$ a neighborhood of a point $\bol{x}\in\Omega$ (see e.g. \cite{SatoB1}). 
The vector field $\bol{w}$ is solenoidal provided that
\begin{equation}
\di\bol{w}=\Delta\Phi+\nabla\Psi\cdot\nabla\Theta+\Psi\Delta\Theta=0~~~~{\rm in}~~U.\label{divw}
\end{equation}
Let $\mf{H}\lr{U}$ denote the set of harmonic functions in $U$. 
Suppose that $\nabla\Psi\cdot\nabla\Theta=0$. Since $\cu\bol{w}=\nabla\Psi\cp\nabla\Theta\neq\bol{0}$ in $U$,
$\nabla\Psi\neq\bol{0}$ in $U$. Then, $\Psi\neq 0$ almost everywhere in $U$. From equation \eqref{divw}, this implies $\Phi,\Theta\in\mf{H}\lr{U}$. 
Hence, we look for solutions in the form of \eqref{w_4} such that 
$\Phi,\Theta\in\mf{H}\lr{U}$ and $\nabla\Psi\cdot\nabla\Theta=0$. 
On the other hand, the first equation of system \eqref{MFF} reads as
\begin{equation}
\BN{w}=\left[\nabla\Phi\cdot\nabla\Theta+\Psi\abs{\nabla\Theta}^2\right]\nabla\Psi-\lr{\nabla\Phi\cdot\nabla\Psi}\nabla\Theta=\nabla\chi~~~~{\rm in}~~U.
\end{equation}
Notice that $\cu\bol{w}\cdot\nabla\chi=0$ implies $\chi=\chi\lr{\Psi,\Theta}$. 
Next, perform the change of variables $\Psi=e^\psi$:
\begin{equation}
e^\psi\left[\nabla\Phi\cdot\nabla\Theta+e^\psi\abs{\nabla\Theta}^2\right]\nabla\psi-e^\psi\lr{\nabla\Phi\cdot\nabla\psi}\nabla\Theta=\nabla\chi~~~~{\rm in}~~U.\label{AEq}
\end{equation}
We now consider a Cartesian geometry and demand that $\Theta=x$ (notice that $\Delta x=0$). Equation \eqref{AEq} reduces to
\begin{equation}
e^\psi\lr{\Phi_x+e^{\psi}}\nabla \psi-e^{\psi}\lr{\nabla\Phi\cdot\nabla\psi}\nabla x=\nabla \chi~~~~{\rm in}~~U.\label{AEq2}
\end{equation}
Set 
\begin{equation}
\chi=e^{\psi}\lr{x+\frac{e^\psi}{2}}.\label{chi}
\end{equation}
We have
\begin{equation}
\nabla\chi=e^\psi\lr{x+e^\psi}\nabla\psi+e^\psi\nabla x.\label{dchi}
\end{equation}
Hence, comparing the left-hand side of equation \eqref{AEq2} with the right-hand side of equation \eqref{dchi},
we see that a solution in the form
\begin{equation}
\bol{w}=\nabla\Phi+e^\psi \,\nabla x,
\end{equation}
can be obtained if we can find functions $\Phi$ and $\psi$ such that, in $U$,
\begin{subequations}
\begin{align}
\Delta\Phi&=0,\\
\psi&=\psi\lr{y,z},\\
\Phi_x&=x,\\
\nabla\Phi\cdot\nabla\psi&=-1.
\end{align}
\end{subequations}
Here, the second equations follows from the condition $\nabla\Psi\cdot\nabla\Theta=0$.
Integration of the third equation gives
\begin{equation}
\Phi=\frac{x^2-y^2}{2}+\phi,
\end{equation}
with $\phi\in\mf{H}\lr{U}$ an harmonic function of the $\lr{y,z}$ variables, i.e. $\phi=\phi\lr{y,z}$ and $\Delta\phi=0$.
Now the solution has the form
\begin{equation}
\bol{w}=\nabla\lr{\frac{x^2-y^2}{2}+\phi}+e^{\psi}\,\nabla x,\label{AEq4}
\end{equation}
and the conditions on $\phi$ and $\psi$ become
\begin{subequations}\label{AEq5}
\begin{align}
\Delta\phi&=0,\label{Lapphi}\\
\psi&=\psi\lr{y,z},\\
\phi&=\phi\lr{y,z},\\
-y\psi_y+\nabla\phi\cdot\nabla\psi&=-1.\label{psiphi}
\end{align}
\end{subequations}
Observe that, once $\phi$ is given as a solution of \eqref{Lapphi}, the 
function $\psi$ can be obtained through the method of characteristics from equation \eqref{psiphi}. 
Below, explicit solutions of system \eqref{MFF} in the form \eqref{AEq4} are given.

\begin{example}\label{ex4_1}
A simple solution of system \eqref{AEq5} is $\phi=-\psi=z$. The corresponding solution of system \eqref{MFF} is
\begin{equation}
\bol{w}=\nabla\lr{\frac{x^2-y^2}{2}+z}+e^{-z}\,\nabla x,~~~~\chi=e^{-z}\lr{x+\frac{e^{-z}}{2}}.\label{w4_1}
\end{equation} 
\end{example}

\begin{example}\label{ex4_2}
The functions $\phi=z$ and $\psi=z+2\log y$ are solutions of system \eqref{AEq5}.
The corresponding solution of system \eqref{MFF} is
\begin{equation}
\bol{w}=\nabla\lr{\frac{x^2-y^2}{2}+z}+y^2e^{z}\,\nabla x,~~~~\chi=y^2e^{z}\lr{x+\frac{y^2 e^{z}}{2}}.\label{w4_2}
\end{equation} 
\end{example}

\begin{example}\label{ex4_3}
The functions $\phi=\frac{z^2-y^2}{2}$ and $\psi=\log \lr{yz}$ are solutions of system \eqref{AEq5}.
The corresponding solution of system \eqref{MFF} is
\begin{equation}
\bol{w}=\nabla\lr{\frac{x^2+z^2}{2}-y^2}+yz\,\nabla x,~~~~\chi=yz\lr{x+\frac{yz}{2}}.\label{w4_3}
\end{equation} 
\end{example}

\begin{example}\label{ex4_4}
More generally, the functions 
\begin{equation}
\begin{split}
\phi&=\alpha\lr{\frac{z^2-y^2}{2}}+\beta z+\gamma y,\\
\psi&=\frac{1}{1+\alpha}\log\left[\lr{1+\alpha}y-\gamma\right]+\delta\left[\gamma-\lr{1+\alpha}y\right]^{\frac{\alpha}{1+\alpha}}\lr{\frac{\beta}{\alpha}+z},
\end{split}
\end{equation}
are solutions of system \eqref{AEq5}.
Here, the real constants $\alpha,\beta,\gamma,\delta\in\mathbb{R}$ are chosen so that $\psi$ is real and smooth  
in the domain of interest. 
The corresponding solution of system \eqref{MFF} is
\begin{equation}
\begin{split}
\bol{w}=&\nabla\lr{\frac{x^2+\alpha z^2-\lr{1+\alpha}y^2}{2}+\beta z+\gamma y}\\
&+\left[\lr{1+\alpha}y-\gamma\right]^{\frac{1}{1+\alpha}}\exp\left\{\delta\left[\gamma-\lr{1+\alpha}y\right]^{\frac{\alpha}{1+\alpha}}\lr{\frac{\beta}{\alpha}+z}\right\}\nabla x.
\end{split}
\end{equation} 
The long expression for the pressure field $\chi$ was omitted since it can be obtained 
easily from equation \eqref{chi}.
\end{example}

Let us examine the symmetry properties of these solutions.
We have the following:

\begin{proposition}\label{prop6}
There exist solutions of system \eqref{MFF} with non-vanishing pressure gradients and without continuous
Euclidean symmetries, i.e. smooth solenoidal vector fields $\bol{w}\in C^{\infty}\lr{\Omega}$ 
solving \eqref{MFF} in some bounded domain $\Omega\subset\mathbb{R}^3$ with $\nabla\chi\neq\bol{0}$ 
and satisfying the condition \eqref{SE3asym} in $\Omega$.  
\end{proposition}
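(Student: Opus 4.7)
The plan is to prove Proposition \ref{prop6} by exhibiting one of the explicit solutions already constructed in Examples \ref{ex4_1}--\ref{ex4_4} and verifying directly that it satisfies the two required properties in any sufficiently large bounded domain $\Omega\subset\mathbb{R}^3$. The cleanest choice appears to be Example \ref{ex4_3}, namely
\begin{equation}
\bol{w}=\nabla\lr{\frac{x^2+z^2}{2}-y^2}+yz\,\nabla x,\qquad \chi=yz\lr{x+\frac{yz}{2}},
\end{equation}
since $\chi$ is polynomial and $\nabla\chi$ is transparently non-vanishing almost everywhere.

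First I would check that $\bol{w}$ actually solves \eqref{MFF} with non-vanishing pressure gradient. This is an immediate consequence of the derivation leading to Example \ref{ex4_3}: the pair $\phi=(z^2-y^2)/2$, $\psi=\log(yz)$ satisfies system \eqref{AEq5}, so \eqref{MFF} holds by construction. The gradient $\nabla\chi=yz\,\nabla x+(x+yz)(z\nabla y+y\nabla z)$ vanishes only on the algebraic set $\{yz=0\}\cap\{x+yz=0\}$, so I would choose $\Omega$ to be, say, an open ball centered at a point with $y,z>0$, thereby ensuring $\nabla\chi\neq\bol{0}$ in $\Omega$.

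Next I would verify the asymmetry condition \eqref{SE3asym}, which is the main substantive step. Following the model of Proposition \ref{prop5}, I compute
\begin{equation}
\mf{L}_{\bol{\xi}_E}\bol{w}=\lr{\bol{\xi}_E\cdot\nabla}\bol{w}-\bol{b}\cp\bol{w}
\end{equation}
for $\bol{\xi}_E=\bol{a}+\bol{b}\cp\bol{x}$ with $\bol{a}=(a_x,a_y,a_z)$, $\bol{b}=(b_x,b_y,b_z)$, and project along $\nabla x$, $\nabla y$, $\nabla z$ in turn. Each projection produces a polynomial expression in $(x,y,z)$ that must vanish identically on $\Omega$. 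Setting coefficients of the monomials $xy$, $xz$, $yz$, $x$, $y$, $z$, and the constant term equal to zero produces a linear system in the six unknowns $a_x,a_y,a_z,b_x,b_y,b_z$. The expectation is that this system admits only the trivial solution, which would conclude the proof. If the algebra of Example \ref{ex4_3} turns out to leave a nontrivial residual symmetry (such as a $\p_x$-translation from the fact that $\bol{w}$ depends on $x$ only through $\nabla x$), one can instead use Example \ref{ex4_4} with generic constants $\alpha,\beta,\gamma,\delta$, or perturb Example \ref{ex4_3} by adding a harmonic $\phi$ that depends nontrivially on both $y$ and $z$, which destroys any remaining translational or rotational invariance.

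The main obstacle will be this final bookkeeping: showing that the polynomial identity $\mf{L}_{\bol{\xi}_E}\bol{w}\equiv\bol{0}$ in $\Omega$ forces $\bol{a}=\bol{b}=\bol{0}$. As in the proof of Proposition \ref{prop5}, the cleanest argument is to restrict attention to a surface (or to a sequence of monomial coefficients) on which most terms collapse, extract partial conclusions such as $b_y=b_z=0$, then feed these back into the remaining components to knock out the other constants one at a time. No new analytic machinery is required; the difficulty is purely combinatorial, and choosing $\phi$ and $\psi$ with sufficiently generic $\lr{y,z}$-dependence ensures no accidental continuous Euclidean symmetry survives.
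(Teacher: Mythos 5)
Your proposal is correct and follows essentially the same route as the paper: the paper's own proof exhibits Example \ref{ex4_1} as the primary witness on the unit ball and then verifies Examples \ref{ex4_2} and \ref{ex4_3} by exactly the monomial-coefficient argument you describe, and for Example \ref{ex4_3} the single projection $\nabla x\cdot\mf{L}_{\bol{\xi}_E}\bol{w}=a_x-3b_zy+z\lr{a_y+b_zx-b_xz}+y\lr{a_z+b_xy-b_yx}$ already forces $\bol{a}=\bol{b}=\bol{0}$, so the residual-symmetry fallback you hedge against is not needed. The only detail to keep in mind is that your ball must sit in the region $yz>0$ both so that $\psi=\log(yz)$ is defined in the construction and so that $\nabla\chi\neq\bol{0}$, which you already address.
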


\begin{proof}
Consider the solution \eqref{w4_1} given in example \ref{ex4_1}. 
The vector field $\bol{w}$ in equation \eqref{w4_1} 
can be rearranged as follows:
\begin{equation}
\bol{w}=\lr{x+e^{-z}}\,\nabla x-y\,\nabla y+\nabla z.
\end{equation}
Projecting equation \eqref{LE} along $\nabla x$, we have:
\begin{equation}
\begin{split}
\nabla x\cdot\mf{L}_{\lr{\bol{a}+\bol{b}\cp\bol{x}}}\bol{w}=&
a_x+b_y \lr{z-1}-2b_z y-e^{-z}\lr{a_z+b_x y-b_y x}.\label{p6_Lx}
\end{split}
\end{equation}
We choose the domain $\Omega$ to be the unit open ball $B\lr{\bol{0}}$ in $\mathbb{R}^3$ centered at $\bol{x}=\bol{0}$. 
On the $z$-axis (corresponding to $x=y=0$), the expression 
on the right-hand side of \eqref{p6_Lx} vanishes provided that
\begin{equation}
a_x+b_y\lr{z-1}-a_z e^{-z}=0.
\end{equation}
Hence, $a_x=a_z=b_y=0$ and the condition that the $x$-component of the Lie-derivative \eqref{p6_Lx} is zero reduces to
\begin{equation}
2b_z+b_x e^{-z}=0.\label{p6_Lx2}
\end{equation}
It follows that $b_x=b_z=0$ so that $\bol{a}+\bol{b}\cp\bol{x}=a_y\p_y$. 
Next, consider the projection of \eqref{LE} along $\nabla y$:
\begin{equation}
\nabla y\cdot\mf{L}_{\lr{a_y \p_y}}\bol{w}=-a_y.
\end{equation}
The right-hand side vanishes if and only if $a_y=0$.
We have thus shown that $\bol{a}+\bol{b}\cp\bol{x}=\bol{0}$.
This implies that the smooth solenoidal vector field \eqref{w4_1} in the unit open ball does not possess continuous Euclidean symmetries 
while exhibiting a non-vanishing pressure gradient
\begin{equation}
\nabla\chi=-e^{-z}\lr{x+e^{-z}}\nabla z+e^{-z}\nabla x.
\end{equation}
This proves proposition \ref{prop6}. 

For completeness, let us verify that also the examples
\eqref{w4_2} and \eqref{w4_3} do not possess continuous Euclidean symmetries. 
The vector field $\bol{w}$ in \eqref{w4_2} can be rearranged as
\begin{equation}
\bol{w}=\lr{x+y^2 e^z}\,\nabla x-y\,\nabla y+\nabla z.
\end{equation}
Projecting equation \eqref{LE} along $\nabla x$, we have:
\begin{equation}
\begin{split}
\nabla x\cdot\mf{L}_{\lr{\bol{a}+\bol{b}\cp\bol{x}}}\bol{w}=&
a_x+b_y \lr{z-1}-2 b_z y+2y e^z\lr{a_y+b_z x-b_x z}\\&+y^2 e^z\lr{a_z+b_x y-b_y x}.\label{p62_Lx}
\end{split}
\end{equation}
Again, we set $\Omega=B\lr{\bol{0}}$. On the plane $y=0$, we have the condition
\begin{equation}
a_x+b_y\lr{z-1}=0.
\end{equation}
This implies $a_x=b_y=0$. 
Next, consider the $y$-axis $x=z=0$:
\begin{equation}
2\lr{a_y-b_z}+a_z y+b_x y^2=0.
\end{equation}
Therefore, $a_z=b_x=0$ and $a_y=b_z$.
Equation \eqref{p62_Lx} now reads as
\begin{equation}
\begin{split}
\nabla x\cdot\mf{L}_{\lr{\bol{a}+\bol{b}\cp\bol{x}}}\bol{w}=&
2 a_y y\left[ e^z\lr{1+ x}- 1\right].\label{p62_Lx}
\end{split}
\end{equation}
We thus conclude that $a_y=b_z=0$, and the vector field $\bol{w}$ of 
equation \eqref{w4_2}
does not exhibit continuous Euclidean symmetries. 

The vector field $\bol{w}$ in \eqref{w4_3} can be rearranged as
\begin{equation}
\bol{w}=\lr{x+yz}\,\nabla x-2y\,\nabla y+z\nabla z.
\end{equation}
Projecting equation \eqref{LE} along $\nabla x$, we have:
\begin{equation}
\begin{split}
\nabla x\cdot\mf{L}_{\lr{\bol{a}+\bol{b}\cp\bol{x}}}\bol{w}=&
a_x-3b_z y+z\lr{a_y+b_z x-b_x z}\\&+y\lr{a_z+b_x y-b_y x}.
\label{p63_Lx}
\end{split}
\end{equation}
Set $\Omega=B\lr{\bol{0}}$. On the $x$-axis $y=z=0$, we have the condition
$a_x=0$. Then, on the $y$-axis $z=x=0$, 
\begin{equation}
-3b_z +a_z+b_x y=0.
\end{equation}
This implies $b_x=0$ and $a_z=3b_z$.
Next, consider the $z$-axis $x=y=0$. We have $a_y=0$. 
The condition that the righ-hand side of equation \eqref{p63_Lx} vanishes now reads as
\begin{equation}
b_z z -b_y  y=0.
\end{equation}
It follows that $b_z=b_y=a_z=0$. We conclude that the vector field $\bol{w}$ in \eqref{w4_3} does not possess continuous Euclidean symmetries. 
\end{proof}

Proposition \eqref{prop6} can now be used to construct square integrable solutions $\bol{w}\in L^2\lr{\Omega}$ of the boundary value problem
\eqref{MFF}, \eqref{BC} without continuous Euclidean symmetries and
with non-vanishing pressure gradients:

\begin{theorem}\label{thm1}
Let $\Omega\subset\mathbb{R}^3$ denote a smoothly bounded domain with boundary $\p\Omega$. Let $\bol{n}$ denote the unit outward normal to $\p\Omega$. Then, the boundary value problem
\begin{equation}\label{TChi}
\begin{split}
&\BN{w}=\nabla\chi,~~~~\di\bol{w}=0~~~~{\rm in}~~\Omega,\\
&\bol{w}\cdot\bol{n}=0~~~~{\rm on}~~\p\Omega,
\end{split}
\end{equation}  
admits solutions $\bol{w}\in L^2\lr{\Omega}$ such that
there exist an open set $U\subset\Omega$ with
\begin{equation}
\begin{split}
&\bol{w}\in C^{\infty}\lr{U},\\
&\nabla\chi\neq\bol{0}~~~~{\rm in}~~U,\\
&\mf{L}_{\lr{\bol{a}+\bol{b}\cp\bol{x}}}\bol{w}\neq\bol{0}~~\forall\bol{a},\bol{b}\in\mathbb{R}^3:\bol{a}+\bol{b}\cp\bol{x}\neq\bol{0}~~~~{\rm in}~~U.
\end{split}
\end{equation}
\end{theorem}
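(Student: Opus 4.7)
The plan is to construct $\bol{w}$ as a stepped-pressure equilibrium in the spirit of Bruno-Lawrence cited in the introduction: a smooth non-symmetric, pressure-driven solution on a central subdomain $U$, matched across a common flux surface to a Beltrami-type field on the peripheral region that carries the tangential boundary condition on $\p\Omega$. Proposition \ref{prop6} already supplies the non-symmetric interior piece; the real work is the gluing and the verification that the resulting global field is an admissible $L^2$ solution.

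First I would pick one of the explicit solutions of Proposition \ref{prop6} (say $\bol{w}_c=\nabla\lr{\tfrac{x^2-y^2}{2}+z}+e^{-z}\nabla x$ with $\chi_c=e^{-z}\lr{x+\tfrac{1}{2}e^{-z}}$ from Example \ref{ex4_1}), embed the unit ball $B(\bol{0})$ where Proposition \ref{prop6} applies into $\Omega$, and select a regular value $c$ of $\chi_c$ so that $\overline{U}=\{\chi_c\le c\}\cap B(\bol{0})$ is compact, smoothly bounded, and strictly contained in $\Omega$, with smooth interface $S=\{\chi_c=c\}$. Because $\bol{w}_c\cdot\nabla\chi_c=0$, the field $\bol{w}_c$ is tangent to $S$; Proposition \ref{prop6} gives $\nabla\chi_c\neq\bol{0}$ and the absence of continuous Euclidean symmetries on $U$, which is the distinguished open set required by the statement.

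In the annular peripheral region $\Omega\setminus\overline{U}$ I would construct a Beltrami-type field $\bol{w}_p$ satisfying $\BN{w_p}=\bol{0}$, $\di\bol{w}_p=0$, $\bol{w}_p\cdot\bol{n}=0$ on $\p\Omega$, and $\bol{w}_p\cdot\bol{n}_S=0$ on $S$. The simplest choice is the trivial Beltrami field $\bol{w}_p=\nabla f$ with $f$ the solution of the Neumann problem $\Delta f=0$ in $\Omega\setminus\overline{U}$ with $\p_{\bol{n}}f=0$ on $\p\Omega\cup S$, which exists and is smooth on the annular region by standard elliptic theory; alternatively one may invoke Theorem 2 of \cite{YRot} on this (possibly multiply-connected) annular domain to obtain a nontrivial constant-$\hat{h}$ Beltrami field with the required tangency. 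Defining $\bol{w}$ by $\bol{w}=\bol{w}_c$ on $U$ and $\bol{w}=\bol{w}_p$ on $\Omega\setminus\overline{U}$ gives a field bounded on compact sets, hence in $L^2(\Omega)$, with distributional divergence zero (because the normal trace of $\bol{w}$ is zero from both sides of $S$) and tangential to $\p\Omega$.

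The main obstacle is the force balance across $S$. On each open piece, $\BN{w}=\nabla\chi$ holds classically (with $\chi=\chi_c-c$ on $U$, and $\chi$ constant on $\Omega\setminus\overline{U}$). Across $S$ one has in general a jump in the tangential part of $\bol{w}$, which produces a surface-supported delta-function term in $\cu\bol{w}$; taking the cross product with $\bol{w}$ yields a surface term in the normal direction that must be absorbed into $\nabla\chi$. This is the stepped-pressure balance $[\chi+\tfrac{1}{2}\abs{\bol{w}}^2]=0$ on $S$. Since $\abs{\bol{w}_c}^2$ is not constant along a generic level set of $\chi_c$, I cannot match pointwise with a single constant-pressure peripheral field chosen rigidly; I would resolve this either by using the multi-parameter family of Example \ref{ex4_4} together with the free choice of $c$ to find a configuration where $|\bol{w}_c|^2|_S$ is constant (so that constant-$\chi$ peripheral matching works), or by reading $\chi$ as a BV function with a prescribed jump on $S$ and checking that the identity $\BN{w}=\nabla\chi$ holds distributionally (legitimate since only $\bol{w}\in L^2$ is claimed). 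Non-symmetry on $U$ is then inherited verbatim from Proposition \ref{prop6}.
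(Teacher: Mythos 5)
Your construction is essentially the paper's: excise a central region from $\Omega$, place the explicit non-symmetric pressure-carrying field of Example \ref{ex4_1} (whose lack of continuous Euclidean symmetries is Proposition \ref{prop6}) in the centre, put a Beltrami field obtained from Theorem 2 of \cite{YRot} in the peripheral multiply connected region so that the tangential condition on $\p\Omega$ holds, and observe that the resulting piecewise field is in $L^2\lr{\Omega}$. The difference is in how much you ask of the interface. The paper simply takes the interface to be the sphere $\p B\lr{p,\epsilon}$, imposes tangency of the peripheral Beltrami field on it (via the boundary condition $\bol{w}\cdot\bol{n}_A=0$ on $\p A$), and stops there: no continuity of the normal trace from the inside, no total-pressure balance, and the equations are only claimed in the interiors of the two pieces. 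You correctly identify that a genuinely distributional solution would require the normal traces to agree across $S$ and the stepped-pressure condition $[\chi+\tfrac{1}{2}\abs{\bol{w}}^2]=0$ to hold there, which is more than the paper establishes. However, your specific device for the first requirement does not work as stated: the level sets of $\chi_c=e^{-z}\lr{x+\tfrac{1}{2}e^{-z}}$ are unbounded graphs $x=ce^{z}-\tfrac{1}{2}e^{-z}$, so $\left\{\chi_c\le c\right\}\cap B\lr{\bol{0}}$ is never bounded by a single closed flux surface; its boundary necessarily contains a piece of the sphere, on which $\bol{w}_c$ has a nonzero normal component, and the tangency argument collapses there. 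Also note that your ``simplest choice'' for the peripheral field, the Neumann harmonic gradient on the shell, is identically zero (integrate $\abs{\nabla f}^2$ by parts), which is admissible but trivial. In short: your core construction reproduces the paper's proof and suffices for the theorem as the paper interprets it; the additional matching programme you sketch is well motivated (it is exactly the Bruno--Lawrence stepped-pressure picture) but is left incomplete and, in the form proposed, cannot be completed with Example \ref{ex4_1} because its pressure surfaces do not close up.
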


\begin{proof}
First, choose the origin of a Cartesian coordinate system to be a point $p\in\Omega$. Let $B\lr{p,\epsilon}$ be an open ball centered at $p$ with radius $\epsilon>0$ contained in $\Omega$, $\ov{B}\lr{p,\epsilon}\subset\Omega$. Here, the overbar denotes the closure of a set.
We define $A=\Omega-\ov{B}\lr{p,\epsilon}$ and consider the boundary value problem
\begin{equation}\label{TBel}
\begin{split}
&\cu\bol{w}=\hat{h}\,\bol{w}~~~~{\rm in}~~A,\\
&\bol{w}\cdot\bol{n}_A=0~~~~{\rm on}~~\p A.
\end{split}
\end{equation} 
Here $\hat{h}\in\mathbb{R}$ is a real constant, $\p A$ the boundary of $A$, and $\bol{n}_A$ the unit outward normal to $\p A$. 
Observe that, by construction, the domain $A$ is multiply connected
and smoothly bounded. Hence, from theorem 2 of \cite{YRot}, the boundary value problem \eqref{TBel} admits a strong Beltrami field solution $\bol{w}_{\rm Bel}$. 
In particular, $\bol{w}_{\rm Bel}\in H^1\lr{A}$, with $H^1$ the standard Sobolev
space of order $1$. 
The desired solution of \eqref{TChi} can be given as
\begin{equation}\label{SolBVP}
\bol{w}=\begin{cases}
\bol{w}_{\rm Bel}~~~~{\rm if}~~\bol{x}\in \ov{A}\\
\nabla\lr{\frac{x^2-y^2}{2}+z}+e^{-z}\nabla x~~~~{\rm if}~~\bol{x}\in B\lr{p,\epsilon}
\end{cases}.
\end{equation}
Evidently $\bol{w}\in L^2\lr{\Omega}$ and $\nabla\chi\neq\bol{0}$ in $B\lr{p,\epsilon}$. Furthermore, 
since \eqref{SolBVP} coincides with \eqref{w4_1} in $B\lr{p,\epsilon}$, 
this solution does not possess continuous Euclidean symmetries 
as a consequence of proposition \ref{prop6}. 
\end{proof}

\section{Symmetries of the Beltrami equation}

In the previous sections we have shown that there exist ideal magnetofluid equilibria without continuous Euclidean symmetries. 
However, these solutions were constructed ad hoc by carefully
calibrating the geometry of local coordinates and Clebsch parameters.   
Therefore, it is desirable to have a systematic method to obtain new solutions with given symmetry properties. The purpose of the present section is to study the symmetries of the Beltrami equation \eqref{BF}. Here, a symmetry of a differential equation is intended as a transformation mapping solutions into solutions. We will see that these symmetries, which should not be confused with the symmetries of the solutions, enable us to derive
new solutions without Euclidean symmetries from known ones. 
In particular, it will be shown that, if the proportionality coefficient $\hat{h}$ of a solenoidal Beltrami field $\bol{w}$ exhibits a continuous Euclidean symmetry $\bol{\xi}_E$ such that $\mf{L}_{\bol{\xi}_E}\hat{h}=0$,  
then, the transformed vector field $\mf{L}_{\bol{\xi}_E}\bol{w}$ is either the null vector or a solenoidal Beltrami field with the same proportionality coefficient $\hat{h}$. 

Let $\mf{B}_{\hat{h}}\lr{\Omega}$ denote the set comprising solenoidal solutions of equation \eqref{BF} in a bounded domain $\Omega$ with given proportionality coefficient $\hat{h}\in C^{\infty}\lr{\Omega}$ and the null vector $\bol{0}$:
\begin{equation}  
\mf{B}_{\hat{h}}\lr{\Omega}=\left\{\bol{w}\in C^{\infty}\lr{\Omega}~\rvert~\cu\bol{w}=\hat{h}\,\bol{w},~\di\bol{w}=0\right\}\cup\left\{\bol{0}\right\}. 
\end{equation}
%Note that $\mf{B}_{\hat{h}}\lr{\Omega}$ includes the null vector $\bol{0}$. 
The following holds:

\begin{proposition}\label{prop7}
Let $\bol{\xi}_{E}=\bol{a}+\bol{b}\cp\bol{x}$, $\bol{a},\bol{b}\in\mathbb{R}^3$, denote a continuous Euclidean symmetry.
Let $\bol{w}\in T\Omega$ be a solenoidal vector field, in general not symmetric with respect to $\bol{\xi}_{E}$. 
Then, the Lie-derivative and the curl operator commute, 
\begin{equation}
\mf{L}_{\bol{\xi}_{E}}\cu\bol{w}=\cu\mf{L}_{\bol{\xi}_{E}}\bol{w}.
\end{equation}
\end{proposition}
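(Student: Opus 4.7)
The plan is to translate the identity into the language of differential forms, where the commutation with the exterior derivative is automatic, and then invoke the Killing property of $\bol{\xi}_E$ already established in Proposition \ref{prop3} to handle the metric-dependent pieces. Concretely, I would identify the vector field $\bol{w}$ with the $1$-form $\omega = w_i\,dx^i$ via the Euclidean flat isomorphism, so that $\cu\bol{w}$ corresponds to the vector field dual to $\star d\omega$, where $\star$ is the Hodge star with respect to the Euclidean metric. The Lie derivative commutes with the exterior derivative on any form, giving $\mf{L}_{\bol{\xi}_E}(d\omega)=d(\mf{L}_{\bol{\xi}_E}\omega)$. Because Proposition \ref{prop3} shows that $\bol{\xi}_E$ preserves the Euclidean metric, $\mf{L}_{\bol{\xi}_E}$ also commutes with the flat and sharp isomorphisms and with $\star$. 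Chaining these three commutations delivers the claim.

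For readers who prefer a direct vector-calculus derivation, the same conclusion follows from the identity recorded earlier in the paper, namely
\begin{equation*}
\cu(\bol{v}\cp\bol{\xi}) = (\di\bol{\xi})\bol{v} - (\di\bol{v})\bol{\xi} + (\bol{\xi}\cdot\nabla)\bol{v} - (\bol{v}\cdot\nabla)\bol{\xi}.
\end{equation*}
Since $\di\bol{w}=0$, $\di(\cu\bol{w})=0$, and $\di\bol{\xi}_E=0$ (immediate from $\bol{\xi}_E = \bol{a} + \bol{b}\cp\bol{x}$), the identity specializes to $\mf{L}_{\bol{\xi}_E}\bol{w} = \cu(\bol{w}\cp\bol{\xi}_E)$ and $\mf{L}_{\bol{\xi}_E}\cu\bol{w} = \cu(\cu\bol{w}\cp\bol{\xi}_E)$. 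The statement to be proved then reduces to
\begin{equation*}
\cu\,\cu(\bol{w}\cp\bol{\xi}_E) = \cu(\cu\bol{w}\cp\bol{\xi}_E),
\end{equation*}
i.e.\ the vector field $\cu(\bol{w}\cp\bol{\xi}_E) - \cu\bol{w}\cp\bol{\xi}_E$ has vanishing curl. This last equality can be checked component-wise in Cartesian coordinates, where the very special structure of $\bol{\xi}_E$ is exploited: $\p_i\xi_E^j$ is a \emph{constant} antisymmetric matrix and $\p_i\p_j\xi_E^k=0$. With those facts in hand, the $\epsilon$-contraction terms collapse and the residue is recognized as the gradient of a scalar built from $\bol{w}$ and $\bol{\xi}_E$.

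The main obstacle is essentially bookkeeping rather than conceptual. In the vector-calculus argument, one has to be scrupulous that each cancellation rests only on $\di\bol{w}=0$ and on the Killing structure of $\bol{\xi}_E$ (affine coefficients plus antisymmetric gradient), and not on any further Beltrami-type assumption on $\bol{w}$. The forms-based proof packages all of this juggling into the single nontrivial ingredient $\mf{L}_{\bol{\xi}_E}\star = \star\,\mf{L}_{\bol{\xi}_E}$, whose validity is precisely equivalent to the statement of Proposition \ref{prop3}; this is why I would present the forms derivation as the primary argument and defer the component computation to a verification.
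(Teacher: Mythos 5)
Your proposal is correct, and your primary (differential-forms) argument takes a genuinely different route from the paper. The paper works entirely in vector calculus: starting from $\mf{L}_{\bol{\xi}}\bol{w}=\cu\lr{\bol{w}\cp\bol{\xi}}=\lr{\bol{\xi}\cdot\nabla}\bol{w}-\lr{\bol{w}\cdot\nabla}\bol{\xi}$ (valid since $\bol{w}$, $\cu\bol{w}$ and $\bol{\xi}_E$ are all solenoidal), it expands via the gradient-of-a-dot-product identity to get $\cu\mf{L}_{\bol{\xi}}\bol{w}=\mf{L}_{\bol{\xi}}\cu\bol{w}+\cu\left[\lr{\cu\bol{\xi}}\cp\bol{w}-2\lr{\bol{w}\cdot\nabla}\bol{\xi}\right]$, and then observes that for $\bol{\xi}_E=\bol{a}+\bol{b}\cp\bol{x}$ the residual bracket is $2\bol{b}\cp\bol{w}-2\bol{b}\cp\bol{w}=\bol{0}$. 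This is exactly the content of your second, component-wise sketch: the "residue" you predict is the pure gradient $\nabla\lr{\bol{\xi}_E\cdot\bol{w}}$, and the cancellation $\lr{\cu\bol{\xi}_E}\cp\bol{w}=2\lr{\bol{w}\cdot\nabla}\bol{\xi}_E$ is precisely where the antisymmetry of $\p_i\xi_E^j$ enters. Your forms-based argument buys more: it isolates the single conceptual ingredient (the Lie derivative along a Killing field commutes with $\flat$, $\sharp$ and the Hodge star, while it always commutes with $d$), it generalizes verbatim to any Killing field on an oriented Riemannian $3$-manifold, and it shows the solenoidal hypothesis on $\bol{w}$ is not actually needed for the commutation — the paper's proof, by contrast, invokes $\di\bol{w}=0$ to represent the Lie derivative as a curl. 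What the paper's route buys is self-containedness: it uses only identities already displayed in section 2 and the explicit facts $\cu\bol{\xi}_E=2\bol{b}$ and $\lr{\bol{w}\cdot\nabla}\bol{\xi}_E=\bol{b}\cp\bol{w}$, with no appeal to exterior calculus. One small imprecision in your write-up: the commutation $\mf{L}_{\bol{\xi}_E}\star=\star\,\mf{L}_{\bol{\xi}_E}$ is a consequence of the Killing property, but it is not "precisely equivalent to" Proposition \ref{prop3}, which instead classifies which vector fields are Killing for the Euclidean metric; this does not affect the validity of your argument.
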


\begin{proof}
Let $\bol{\xi}\in T\Omega$ denote a solenoidal vector field. We have
%\begin{equation}
%\begin{split}
%\mf{L}_{\bol{\xi}}\cu\bol{w}&=\nabla\cp\left[\lr{\cu\bol{w}}\cp\bol{\xi}\right]=\nabla\hat{h}\cp\lr{\bol{w}\cp\bol{\xi}}+\hat{h}\cu\lr{\bol{w}\cp\bol{\xi}}\\
%&=(\nabla\hat{h}\cdot\bol{\xi})\bol{w}+\hat{h}\mf{L}_{\bol{\xi}}\bol{w}.
%\end{split}
%\end{equation}
%Here, we used the fact that $\nabla\hat{h}\cdot\bol{w}=0$. On the other hand,
\begin{equation}
\begin{split}
\cu\mf{L}_{\bol{\xi}}\bol{w}&=\cu\left[\cu\lr{\bol{w}\cp\bol{\xi}}\right]\\
&=\cu\left[\lr{\bol{\xi}\cdot\nabla}\bol{w}-\lr{\bol{w}\cdot\nabla}\bol{\xi}\right]\\
&=\cu\left[-2\lr{\bol{w}\cdot\nabla}\bol{\xi}+\nabla\lr{\bol{\xi}\cdot\bol{w}}-\bol{w}\cp\lr{\cu\bol{\xi}}-\bol{\xi}\cp\cu\bol{w}\right]\\
&=\mf{L}_{\bol{\xi}}\cu\bol{w}+\cu\left[\lr{\cu\bol{\xi}}\cp\bol{w}-2\lr{\bol{w}\cdot\nabla}\bol{\xi}\right].
\end{split}
\end{equation}
If $\bol{\xi}=\bol{\xi}_{E}$, we have
\begin{equation}
\lr{\cu\bol{\xi}_{E}}\cp\bol{w}-2\lr{\bol{w}\cdot\nabla}\bol{\xi}_{E}=2\bol{b}\cp\bol{w}-2\bol{b}\cp\bol{w}=\bol{0}.
\end{equation}
It follows that
\begin{equation}
\mf{L}_{\bol{\xi}_{E}}\cu\bol{w}=\cu\mf{L}_{\bol{\xi}_{E}}\bol{w}.
\end{equation}
\end{proof}

\begin{proposition}\label{prop8}
Let $\bol{w}\in\mf{B}_{\hat{h}}\lr{\Omega}$ denote a solenoidal Beltrami field  with proportionality coefficient $\hat{h}$ in a bounded domain $\Omega$. 
Suppose that $\mf{L}_{\bol{\xi}_{E}}\hat{h}=0$ for some continuous Euclidean symmetry $\bol{\xi}_{E}=\bol{a}+\bol{b}\cp\bol{x}$, $\bol{a},\bol{b}\in\mathbb{R}^3$.
Then, $\mf{L}_{\bol{\xi}_{E}}^n\bol{w}\in\mf{B}_{\hat{h}}\lr{\Omega}$, that is $\bol{w}$ is either the null vector or a solenoidal Beltrami field with the same proportionality coefficient for all $n\in\mathbb{N}$: 
\begin{equation}
\cu\mf{L}_{\bol{\xi}_{E}}^n\bol{w}=\cu\mf{L}_{\bol{\xi}_{E}}...\mf{L}_{\bol{\xi}_{E}}\bol{w}=\hat{h}\,\mf{L}_{\bol{\xi}_{E}}^n\bol{w}~~~~\forall n\in\mathbb{N}.
\end{equation}
\end{proposition}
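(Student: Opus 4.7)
The plan is to argue by induction on $n$. For $n=0$ the claim $\mf{L}_{\bol{\xi}_{E}}^0\bol{w}=\bol{w}\in\mf{B}_{\hat{h}}\lr{\Omega}$ is the hypothesis. For the inductive step, assume that $\bol{v}:=\mf{L}_{\bol{\xi}_{E}}^{k}\bol{w}\in\mf{B}_{\hat{h}}\lr{\Omega}$; I need to show that $\mf{L}_{\bol{\xi}_{E}}\bol{v}$ again lies in $\mf{B}_{\hat{h}}\lr{\Omega}$. If $\bol{v}=\bol{0}$ there is nothing to do, so assume $\cu\bol{v}=\hat{h}\bol{v}$ and $\di\bol{v}=0$. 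Two things must be checked: that $\mf{L}_{\bol{\xi}_{E}}\bol{v}$ satisfies the Beltrami relation with the same coefficient $\hat{h}$, and that it is solenoidal.

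For the Beltrami part, I would apply $\mf{L}_{\bol{\xi}_{E}}$ to both sides of $\cu\bol{v}=\hat{h}\bol{v}$. Proposition \ref{prop7}, which is the key engine, turns the left-hand side into $\cu\lr{\mf{L}_{\bol{\xi}_{E}}\bol{v}}$; on the right-hand side the scalar Leibniz rule gives $\mf{L}_{\bol{\xi}_{E}}\lr{\hat{h}\bol{v}}=\lr{\mf{L}_{\bol{\xi}_{E}}\hat{h}}\bol{v}+\hat{h}\,\mf{L}_{\bol{\xi}_{E}}\bol{v}$, and the first summand vanishes precisely because of the hypothesis $\mf{L}_{\bol{\xi}_{E}}\hat{h}=0$. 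Combining, $\cu\lr{\mf{L}_{\bol{\xi}_{E}}\bol{v}}=\hat{h}\,\mf{L}_{\bol{\xi}_{E}}\bol{v}$, which is the desired Beltrami relation at the next step.

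For solenoidality, I would first note that $\bol{\xi}_{E}=\bol{a}+\bol{b}\cp\bol{x}$ is itself divergence-free, since $\bol{a}$ is constant and $\di\lr{\bol{b}\cp\bol{x}}=\bol{x}\cdot\cu\bol{b}-\bol{b}\cdot\cu\bol{x}=0$. Then a direct index computation of $\di\mf{L}_{\bol{\xi}_{E}}\bol{v}=\di\left[\lr{\bol{\xi}_{E}\cdot\nabla}\bol{v}-\lr{\bol{v}\cdot\nabla}\bol{\xi}_{E}\right]$ shows that the two cross terms $\lr{\p_i\xi_{E}^{j}}\lr{\p_j v^i}$ and $\lr{\p_i v^j}\lr{\p_j\xi_{E}^{i}}$ cancel by relabeling of summation indices, while the remaining two terms $\bol{\xi}_{E}\cdot\nabla\lr{\di\bol{v}}$ and $\bol{v}\cdot\nabla\lr{\di\bol{\xi}_{E}}$ vanish by the two solenoidality assumptions. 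Equivalently, since $\bol{\xi}_{E}$ is a Killing field of the Euclidean metric (Proposition \ref{prop3}), its flow preserves the volume 3-form, so $\mf{L}_{\bol{\xi}_{E}}$ commutes with the codifferential. Hence $\di\mf{L}_{\bol{\xi}_{E}}\bol{v}=0$, closing the induction.

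I do not anticipate a serious obstacle: the whole argument rides on the clean commutation of curl with $\mf{L}_{\bol{\xi}_{E}}$ supplied by Proposition \ref{prop7} and on the Leibniz rule. The only subtle point worth emphasizing is the role of $\mf{L}_{\bol{\xi}_{E}}\hat{h}=0$: without it, the extra term $\lr{\mf{L}_{\bol{\xi}_{E}}\hat{h}}\bol{v}$ would appear on the right-hand side and the Beltrami form would not be preserved. Including $\bol{0}$ in $\mf{B}_{\hat{h}}\lr{\Omega}$ handles degenerate cases in which $\mf{L}_{\bol{\xi}_{E}}^n\bol{w}$ happens to vanish, so the statement remains valid for every $n\in\mathbb{N}$.
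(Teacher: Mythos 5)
Your proof is correct and follows essentially the same route as the paper: commute $\cu$ with $\mf{L}_{\bol{\xi}_E}$ via Proposition \ref{prop7}, kill the extra term with the hypothesis $\mf{L}_{\bol{\xi}_E}\hat{h}=0$, and iterate. The only cosmetic differences are that the paper expands $\mf{L}_{\bol{\xi}_E}\cu\bol{w}=\cu\lr{\hat h\,\bol{w}\cp\bol{\xi}_E}$ explicitly (invoking $\bol{w}\cdot\nabla\hat h=0$) where you use the abstract Leibniz rule, and that the paper gets solenoidality for free from $\mf{L}_{\bol{\xi}_E}\bol{w}=\cu\lr{\bol{w}\cp\bol{\xi}_E}$ rather than by your index computation.
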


\begin{proof}
Set $n=1$. We have
\begin{equation}
\begin{split}
\mf{L}_{\bol{\xi}_{E}}\cu\bol{w}&=\nabla\cp\left[\lr{\cu\bol{w}}\cp\bol{\xi}_{E}\right]=\nabla\hat{h}\cp\lr{\bol{w}\cp\bol{\xi}_{E}}+\hat{h}\cu\lr{\bol{w}\cp\bol{\xi}_{E}}\\
&=(\nabla\hat{h}\cdot\bol{\xi}_{E})\bol{w}+\hat{h}\mf{L}_{\bol{\xi}_{E}}\bol{w}=(\mf{L}_{\bol{\xi}_{E}}\hat{h})\bol{w}+\hat{h}\mf{L}_{\bol{\xi}_{E}}\bol{w}=\hat{h}\mf{L}_{\bol{\xi}_{E}}\bol{w}.
\end{split}
\end{equation}
Here, we used the fact that $\bol{w}\cdot\nabla\hat{h}=0$. 
From proposition \ref{prop7} the curl operator and Lie-derivative commute. Hence,
\begin{equation}
\cu\mf{L}_{\bol{\xi}_{E}}\bol{w}=\hat{h}\mf{L}_{\bol{\xi}_{E}}\bol{w}.
\end{equation}
The procedure can be repeated for all $n>1$ by setting $\bol{w}^{n-1}=\mf{L}_{\bol{\xi}_{E}}\bol{w}^{n-2}=\mf{L}_{\bol{\xi}_{E}}^{n-1}\bol{w}^0$ with $\bol{w}^0=\bol{w}$. 
Then, 
\begin{equation}
\cu\mf{L}_{\bol{\xi}_{E}}\bol{w}^{n-1}=\hat{h}\,\mf{L}_{\bol{\xi}_{E}}\bol{w}^{n-1}.
\end{equation}
This implies
\begin{equation}
\cu\mf{L}_{\bol{\xi}_{E}}^n\bol{w}=\hat{h}\,\mf{L}_{\bol{\xi}_{E}}^n\bol{w}~~~~\forall~~n\in\mathbb{N}.
\end{equation}
Finally, observe that, if $\bol{w}$ is a solenoidal vector field, the Lie derivative $\mf{L}_{\bol{\xi_E}}\bol{w}=\cu\lr{\bol{w}\cp\bol{\xi}_E}$ is itself solenoidal. Hence $\mf{L}^n_{\bol{\xi}_E}\bol{w}$ is solenoidal.
\end{proof}

\begin{remark}
Notice that the procedure described in proposition \ref{prop8} may end when $\mf{L}_{\bol{\xi}_{E}}^m\bol{w}=\bol{0}$ for some $m\in\mathbb{N}$.
Then, all successive elements of the sequence $\mf{L}_{\bol{\xi}_{E}}^n\bol{w}$, $n>m$, are trivial Beltrami fields $\mf{L}_{\bol{\xi}_{E}}^n\bol{w}=\bol{0}$. 
\end{remark}

\begin{remark}
Let $\bol{w}\in\mf{B}_{\hat{h}}\lr{\Omega}$ be a solenoidal Beltrami field.
Let $\bol{w}'$ denote the result of the transformation of $\bol{w}$ according to the Euclidean isometry associated with $\bol{\xi}_E=\bol{a}+\bol{b}\cp\bol{x}$. Let $\epsilon>0$ be a small constant parametrizing the transformation. We have
\begin{equation}
\bol{w}'=\bol{w}+\epsilon\,\mf{L}_{\bol{\xi}_E}\bol{w}+o\lr{\epsilon^2}.
\end{equation}
Proposition \ref{prop8} can be interpreted as follows:
if the proportionality coefficient of a solenoidal Beltrami field
has a continuous Euclidean symmetry $\bol{\xi}_E$, then 
the infinitesimal transformation of the field is itself a 
solenoidal Beltrami field with the same proportionality coefficient. 
Hence, the transformation associated with $\bol{\xi}_E$ defines
a symmetry of the Beltrami equation. 
\end{remark}

\noindent The following is an example of application of proposition \ref{prop8}.

\begin{example}
Consider the solenoidal Beltrami field encountered in proposition \ref{prop5}
\begin{equation}
\begin{split}
\bol{w}&=-\cos\lr{z^2}\,\nabla\lr{e^x\cos y}+\sin\lr{z^2}\,\nabla\lr{e^x\sin y}\\
&=e^x\left[-\cos\lr{y+z^2}\nabla x+\sin\lr{y+z^2}\nabla y\right].
\end{split}
\end{equation}
Notice that $\hat{h}=2z$. Set $\bol{\xi}_{E}=\nabla x$. We have $\mf{L}_{\nabla x}\hat{h}=2\nabla x\cdot\nabla z=0$. Hence,
according to proposition \ref{prop8}, the vector field
\begin{equation}
\mf{L}_{\nabla x}\bol{w},
\end{equation}
is itself a solenoidal Beltrami field with proportionality coefficient $\hat{h}=2z$. We have
\begin{equation}
\mf{L}_{\nabla x}\bol{w}=\cu\lr{\bol{w}\cp\nabla x}=-\cu\left\{e^x\left[{\sin y \cos\lr{z^2}+\cos y \sin\lr{z^2}}\right]\nabla z\right\}=\bol{w}.
\end{equation}
Next, set $\bol{a}=\bol{0}$ and $\bol{b}=\nabla z$, so that $\bol{\xi}_{E}=x\nabla y-y\nabla x$ and $\mf{L}_{\lr{x\nabla y-y\nabla x}}\hat{h}=0$. 
We have
\begin{equation}
\begin{split}
\mf{L}_{\lr{x\nabla y-y\nabla x}}\bol{w}=&\cu\left[{\bol{w}\cp\lr{x\nabla y-y\nabla x}}\right]\\
=&e^x\left[\lr{1+x}\sin\lr{y+z^2}+y\cos\lr{y+z^2}\right]\nabla x\\&+e^x\left[\lr{1+x}\cos\lr{y+z^2}-y\sin\lr{y+z^2}\right]\nabla y.
\end{split}
\end{equation}
One can verify that $\mf{L}_{\lr{x\nabla y-y\nabla x}}\bol{w}$ is a solenoidal Beltrami field with proportionality coefficient $\hat{h}=2z$ and without
continuous Euclidean symmetries. 
 
\end{example}

\section{Concluding Remarks}

The existence of magnetofluidostatic fields without continuous Euclidean symmetries
is a mathematical problem encountered in the design of plasma confinement devices (stellarators)
and in the development of nuclear fusion reactors. 
In this paper, the existence of ideal magnetofluid equilibria without
continuous Euclidean symmetries (combinations of translations and rotations) was studied.

First, we showed that all magnetofluidostatic fields are locally symmetric,
in the sense that one can find a solenoidal vector field defined locally  
with the property that the Lie derivative of the solution with respect to the local field is identically zero.
However, this symmetry is not necessarily an Euclidean isometry.
In particular, there exists smooth magnetofluidostatic fields defined in bounded domains
that do not possess any continuous Euclidean symmetry.
This is true for both Beltrami-type solutions (proposition \ref{prop5}), and non-vanishing pressure gradients solutions (proposition \ref{prop6}).
By combining Beltrami-type solutions with non-vanishing pressure gradient solutions,
it is also possible to construct square integrable solutions of the boundary value
problem, as proven in theorem \ref{thm1}.
Finally, the symmetry properties of the Beltrami equation were studied.
We found that, if the proportionality coefficient of a Beltrami field
possesses a continuous Euclidean symmetry, than the associated transformation
defines a symmetry of the Beltrami equation.
Hence, new solutions can be computed from known ones by 
application of the Lie derivative (proposition \ref{prop8}).

%variational formulation!!
% Refer to Hudson properly, read remaining refs, add refs
%increase refs, repeat Weitzner calculations
%check all formulas including those in the text (in particular introduction)

\section*{Acknowledgments}

\noindent The research of N. S. was supported by JSPS KAKENHI Grant No. 18J01729.
          The author is grateful to Professor Z. Yoshida for useful discussion on $MHD$ equilibria and the notion of symmetry, 
					to Dr. Z. Qu and Professor R. L. Dewar for useful discussion on magnetostatics, and to Professor M. Yamada
					for useful discussion on fluid systems.

\appendix

\section{Derivation of the Grad-Shafranov equation}

Let $\lr{x^1,x^2,x^3}$ denote a coordinate system. 
Suppose that the solution $\bol{w}$ of \eqref{MFF} has an ignorable coordinate $x^3$:
\begin{equation}
\frac{\p w^i}{\p x^3}=0,~~~~\frac{\p\chi}{\p x^3}=0,~~~~\frac{\p g_{ij}}{\p x^3}=0,~~~~i,j=1,2,3.\label{A1}
\end{equation}
As usual, $g_{ij}$ denotes a component of the covariant metric tensor.
It is not difficult to verify that the conditions above are equivalent to demanding that
\begin{equation}
\mf{L}_{\p_3}\bol{w}=\bol{0},~~~~\mf{L}_{\p_3}\chi=0,~~~~\mf{L}_{\p_3}g_{ij}dx^i\otimes dx^j=0. 
\end{equation}
Furthermore, as shown in proposition \ref{prop3}, $\p_3$ must be the combination of a rotation followed by a translation, i.e. $\p_3=\bol{a}+\bol{b}\cp\bol{x}$ for some $\bol{a},\bol{b}\in\mathbb{R}^3$.  
Next, observe that the first equation of system \eqref{MFF} can be written as: 
\begin{equation}
\begin{split}
\BN{w}&=w^i \p_{i}\cp\nabla\cp\lr{w_{j}\nabla x^j}=w^i\p_i \cp\lr{\nabla w_{j}\cp\nabla x^j}\\
&=w^i\left[\delta_{i}^{\,j}\nabla w_{j}-\lr{\p_{i}\cdot\nabla w_{j}}\nabla x^j\right]=w^i\lr{\frac{\p w_{i}}{\p x^j}-\frac{\p w_j}{\p x^i}}\nabla x^j\\
&=\chi_1\nabla x^1+\chi_2\nabla x^2.\label{wxdw}
\end{split}
\end{equation}
Substituting the first two equations of system \eqref{A1} in system \eqref{wxdw}, we obtain
\begin{subequations}
\begin{align}
&w^2\lr{\frac{\p w_{2}}{\p x^1}-\frac{\p w_1}{\p x^2}}+w^3\frac{\p w_{3}}{\p x^1}=\chi_1,\label{j1}\\
&w^1\lr{\frac{\p w_1}{\p x^2}-\frac{\p w_{2}}{\p x^1}}+w^3\frac{\p w_{3}}{\p x^2}=\chi_2,\label{j2}\\
&-w^1\frac{\p w_{3}}{\p x^1}-w^2\frac{\p w_{3}}{\p x^2}=0.\label{j3}
\end{align}
\end{subequations}
%Since $J^{-1}=1$, we also have $\mf{L}_{\p_{3}}J=0$.
Let $J=\p_1\cdot\p_2\cp\p_3$ be the Jacobian of the transformation
$\lr{x,y,z}\mapsto\lr{x^1,x^2,x^3}$.  
Note that here  
$w^2$ denotes the component $w^2=\bol{w}\cdot\nabla x^2$, and should not be confused with the square of the norm of $\bol{w}$. 
Then, the divergence-free condition $\di\bol{w}=0$ reads as
\begin{equation}
\frac{\p}{\p x^1}\lr{Jw^1}+\frac{\p}{\p x^2}\lr{Jw^2}=0.\label{div0}
\end{equation} 
As in the proof of proposition \ref{prop2}, define the stream function $\Theta\lr{x^1,x^2}$ as
\begin{equation}
\Theta\lr{x^1,x^2}=-\int_{a}^{x^1}{J w^2}\lr{s^1,x^2}\,ds^1+\int_{b}^{x^2}{Jw^1}\lr{a,s^2}\,ds^2,
\end{equation}
where $\lr{a,b,c}$ denotes a point in the domain of interest. 
Using equation \eqref{div0}, 
the components $w^1$ and $w^2$ can now be expressed as
\begin{equation}
w^1=\frac{1}{J}\frac{\p\Theta}{\p x^2},~~~~w^2=-\frac{1}{J}\frac{\p\Theta}{\p x^1}.\label{w12}
\end{equation}
Substituting \eqref{w12} in \eqref{j3} we obtain
\begin{equation}
\frac{\p \Theta}{\p x^2}\frac{\p w_3}{\p x^1}-\frac{\p \Theta}{\p x^1}\frac{\p w_3}{\p x^2}=0.
\end{equation}
This implies $\nabla\Theta\cp\nabla w_3=\bol{0}$, so that $w_3=w_3\lr{\Theta}$ depends only on the stream function $\Theta$.
We have thus shown that
\begin{equation}
\bol{w}=\nabla\Theta\cp\nabla x^3+w^{3}\,\p_{3}.
\end{equation}
In a similar way,
\begin{equation}
\nabla\chi\cdot\bol{w}=J^{-1}\lr{\chi_1\Theta_2-\chi_2\Theta_1}=0,
\end{equation}
implies that $\chi=\chi\lr{\Theta}$. 
Next, combine equations \eqref{j1} and \eqref{j2} as follows:
\begin{equation}
\begin{split}
&\left[w^2\lr{\frac{\p w_{2}}{\p x^1}-\frac{\p w_1}{\p x^2}}+w^3\frac{\p w_{3}}{\p x^1}-\chi_1\right]\nabla x^1\\
&+\left[w^1\lr{\frac{\p w_1}{\p x^2}-\frac{\p w_{2}}{\p x^1}}+w^3\frac{\p w_{3}}{\p x^2}-\chi_2\right]\nabla x^2\\
&=
\frac{1}{J}\lr{\frac{\p w_1}{\p x^2}-\frac{\p w_2}{\p x^1}}\nabla\Theta-\nabla\chi+w^3\nabla w_3=\bol{0}.
\end{split}
\end{equation}
Since $w_{3}$ and $\chi$ are functions of $\Theta$, the equation above can be rewritten as
\begin{equation}
\frac{1}{J}\lr{\frac{\p w_1}{\p x^2}-\frac{\p w_{2}}{\p x^1}}-\frac{d\chi}{d\Theta}+w^3\frac{dw_3}{d\Theta}=0.\label{GGS0}
\end{equation}
Now we need to express $w_{1}$, $w_{2}$ and $w^3$ in terms of $\Theta$ and the metric coefficients. From the identity $w_{i}=g_{ij}w^j$ we have
\begin{subequations}
\begin{align}
w_{1}&=\frac{1}{J}\lr{g_{11}\frac{\p\Theta}{\p x^2}-g_{12}\frac{\p\Theta}{\p x^1}}+g_{13}w^3,\label{w_1}\\
w_{2}&=\frac{1}{J}\lr{g_{21}\frac{\p\Theta}{\p x^2}-g_{22}\frac{\p\Theta}{\p x^1}}+g_{23}w^3,\label{w_2}\\
w_{3}&=\frac{1}{J}\lr{g_{31}\frac{\p\Theta}{\p x^2}-g_{32}\frac{\p\Theta}{\p x^1}}+g_{33}w^3.
\end{align}
\end{subequations}
From the third equation, it follows that
\begin{equation}
w^3=\frac{1}{g_{33}}\left[w_{3}\lr{\Theta}+\frac{1}{J}\lr{g_{32}\frac{\p\Theta}{\p x^1}-g_{31}\frac{\p\Theta}{\p x^2}}\right].\label{w3}
\end{equation}
Substituting this expression in \eqref{w_1} and \eqref{w_2}, we obtain
\begin{equation}\label{w_1w_2}
\begin{split}
w_{1}=\frac{g_{13}}{g_{33}}w_{3}+\frac{1}{J}\left[\frac{\p\Theta}{\p x^1}\lr{\frac{g_{13}g_{32}-g_{12}g_{33}}{g_{33}}}+\frac{\p\Theta}{\p x^2}\lr{\frac{g_{11}g_{33}-g_{13}g_{31}}{g_{33}}}\right],\\
w_{2}=\frac{g_{23}}{g_{33}}w_{3}+\frac{1}{J}\left[\frac{\p\Theta}{\p x^1}\lr{\frac{g_{32}g_{23}-g_{22}g_{33}}{g_{33}}}+\frac{\p\Theta}{\p x^2}\lr{\frac{g_{21}g_{33}-g_{31}g_{23}}{g_{33}}}\right].
\end{split}
\end{equation}
On the other hand, the following identities hold:
\begin{subequations}
\begin{align}
g_{13}g_{32}-g_{12}g_{33}&=\lr{\p_{1}\cdot\p_{3}}\lr{\p_{3}\cdot\p_{2}}-\lr{\p_{1}\cdot\p_{2}}\lr{\p_{3}\cdot\p_{3}}=\p_{1}\cp\p_{3}\cdot\p_{3}\cp\p_{2},\\
g_{11}g_{33}-g_{13}g_{31}&=\lr{\p_{1}\cdot\p_{1}}\lr{\p_{3}\cdot\p_{3}}-\lr{\p_{1}\cdot\p_{3}}\lr{\p_{3}\cdot\p_{1}}=\p_{1}\cp\p_{3}\cdot\p_{1}\cp\p_{3},\\
g_{32}g_{23}-g_{22}g_{33}&=\lr{\p_{3}\cdot\p_{2}}\lr{\p_{2}\cdot\p_{3}}-\lr{\p_{2}\cdot\p_{2}}\lr{\p_{3}\cdot\p_{3}}=\p_{3}\cp\p_{2}\cdot\p_{2}\cp\p_{3},\\
g_{21}g_{33}-g_{31}g_{23}&=\lr{\p_{2}\cdot\p_{1}}\lr{\p_{3}\cdot\p_{3}}-\lr{\p_{3}\cdot\p_{1}}\lr{\p_{2}\cdot\p_{3}}=\p_{2}\cp\p_{3}\cdot\p_{1}\cp\p_{3}.
\end{align}
\end{subequations}
Furthermore, the tangent basis $\p_{i}$ is related to the cotangent basis $dx^{i}$ according to $\nabla x^i=\epsilon^{ijk}J^{-1}\p_j\cp \p_{k}$. Hence, 
\begin{subequations}
\begin{align}
g_{13}g_{32}-g_{12}g_{33}&=J^2\nabla x^1\cdot\nabla x^2=J^2 g^{12},\\
g_{11}g_{33}-g_{13}g_{31}&=J^2\nabla x^2\cdot\nabla x^2=J^2 g^{22},\\
g_{32}g_{23}-g_{22}g_{33}&=-J^2\nabla x^1\cdot\nabla x^1=-J^2 g^{11},\\
g_{21}g_{33}-g_{31}g_{23}&=-J^2\nabla x^1\cdot\nabla x^2=-J^2 g^{12}.
\end{align}
\end{subequations}
Using these expressions, equation \eqref{w_1w_2} becomes
\begin{equation}\label{w_1w_22}
\begin{split}
w_1&=\frac{1}{g_{33}}\left[g_{13}w_3+J\lr{g^{12}\frac{\p\Theta}{\p x^1}+g^{22}\frac{\p \Theta}{\p x^2}}\right],\\
w_2&=\frac{1}{g_{33}}\left[g_{23}w_3-J\lr{g^{11}\frac{\p\Theta}{\p x^1}+g^{12}\frac{\p \Theta}{\p x^2}}\right].
\end{split}
\end{equation}
Plugging equations \eqref{w_1w_22} and \eqref{w3} into \eqref{GGS0}, we arrive at
\begin{equation}
\frac{1}{J}\frac{\p}{\p x^i}\lr{\frac{Jg^{ij}}{g_{33}}\frac{\p\Theta}{\p x^j}}-\frac{d\chi}{d\Theta}+\frac{1}{2g_{33}}\frac{dw_{3}^2}{d\Theta}+\frac{w_{3}}{J}\left[\frac{\p }{\p x^2}\lr{\frac{g_{13}}{g_{33}}}-\frac{\p }{\p x^1}\lr{\frac{g_{23}}{g_{33}}}\right]=0.\label{GGS1}
\end{equation}
Observe that
\begin{equation}
\begin{split}
\nabla\cdot&\lr{\frac{\p_3\cp\nabla x^3}{g_{33}}}\,dx\w dy \w dz=\mf{L}_{\frac{\p_3\cp\lr{\p_{1}\cp\p_{2}}}{g_{33}J}}dx\w dy \w dz\\
&=\mf{L}_{\frac{g_{32}\p_{1}-g_{31}\p_2}{g_{33}J}}dx\w dy \w dz=\mf{L}_{\frac{g_{32}\p_{1}-g_{31}\p_2}{g_{33}}}dx^1\w dx^2 \w dx^3\\
&=-\frac{1}{J}\left[\frac{\p }{\p x^2}\lr{\frac{g_{13}}{g_{33}}}-\frac{\p }{\p x^1}\lr{\frac{g_{23}}{g_{33}}}\right]\,dx\w dy \w dz.
\end{split}
\end{equation}
Finally, recalling that the Laplacian operator $\Delta=\p_{x}^2+\p_{y}^2+\p_{z}^2$ in the coordinates $\lr{x^1,x^2,x^3}$ is given by $\Delta=J^{-1}\p_{i}Jg^{ij}\p_{j}$, equation \eqref{GGS1}
can be rearranged as
\begin{equation}
\Delta\Theta-\nabla\Theta\cdot\nabla\log g_{33}-g_{33}\frac{d\chi}{d\Theta}+\frac{1}{2}\frac{dw_{3}^2}{d\Theta}-g_{33}w_{3}\nabla\cdot\lr{\frac{\p_{3}\cp\nabla x^3}{g_{33}}}=0,
\end{equation}
which is the Grad-Shafranov equation.

%\end{normalsize}

\end{document}